\newcommand{\be}{\begin{eqnarray}}
\newcommand{\ee}{\end{eqnarray}}
\newcommand{\bez}{\begin{eqnarray*}}
\newcommand{\eez}{\end{eqnarray*}}
\newcommand{\cM}{\mathcal{M}}
\newcommand{\bsy}{\boldsymbol}
\newcommand{\pa}{\partial}
\newcommand{\imag}{\mathrm{i}\,}
\newcommand{\ins}{\leavevmode 
    \vbox{\kern.2em \hrule width1.2ex height0.1ex} 
    \hbox{\vrule width0.1ex height1.2ex depth0.ex \kern.1em} \,}
\theoremstyle{definition}
\newtheorem{theorem}{Theorem}[section]
\newtheorem{proposition}[theorem]{Proposition}
\newtheorem{corollary}[theorem]{Corollary}
\newtheorem{remark}[theorem]{Remark}
\newtheorem{example}[theorem]{Example}
\numberwithin{theorem}{section}
\numberwithin{equation}{section}
\begin{document}

\title{\textbf{Fr\"olicher-Nijenhuis geometry \\ and integrable matrix PDE systems}}

\author{\sc{Folkert M\"uller-Hoissen} \\
         \small  Institute for Theoretical Physics,  University of G\"ottingen \\
         \small  Friedrich-Hund-Platz 1, 37077 G\"ottingen, Germany \\
         \small  folkert.mueller-hoissen@phys.uni-goettingen.de  } 

\date{ }

\maketitle

\begin{abstract}
Given two tensor fields of type $(1,1)$ on a smooth $n$-dimensional manifold $\cM$, such that all their 
Fr\"olicher-Nijenhuis brackets vanish, the algebra of differential forms on $\cM$ becomes a bi-differential 
graded algebra. As a consequence, there are partial differential equation (PDE) systems 
associated with it, which arise as the integrability condition of a system of linear equations and 
possess a binary Darboux transformation to generate exact solutions. 
We recover chiral models and potential forms of the self-dual Yang-Mills, as well as corresponding 
generalizations to higher than four dimensions, and obtain new integrable non-autonomous 
nonlinear matrix PDEs and corresponding systems. 
\end{abstract}

\section{Introduction}
The (de Rham) algebra of differential forms  on a smooth manifold $\cM$ is a $\mathbb{Z}$-graded algebra, 
$\bigwedge(\cM) = \bigoplus_{k \geq 0} \bigwedge^k(\cM)$, where $\bigwedge^k(\cM)$ is the space of differential 
forms of degree $k$. Supplied with 
the exterior derivative $d$, it becomes the prime 
example of a \emph{differential graded algebra}, since $d$ is a graded derivation of degree one, 
\bez
     d :  \mbox{$\bigwedge^k(\cM)$} \rightarrow \mbox{$\bigwedge^{k+1}(\cM)$} \, , \quad k=0,1, \ldots \, , \qquad
     d (\alpha \wedge \beta) = (d \alpha) \wedge \beta + (-1)^{\mathrm{deg}(\alpha)} \alpha \wedge d \beta \, ,
\eez
 and satisfies 
\bez
               d^2 = 0 \, .
\eez
This is the basic structure on which, in particular, gauge theory is built, which is of exceptional importance 
in mathematics and physics.

But there are actually many graded derivations of degree one on $\bigwedge(\cM)$ with the latter property. 
According to Fr\"olicher and Nijenhuis \cite{Froe+Nije56}, these are in correspondence with tensor fields 
$N$ of type $(1,1)$ 
having vanishing Nijenhuis torsion, which is an expression in terms of $N$, forming a tensor field of type $(1,2)$.  
In this way, the exterior derivative corresponds to the identity tensor field, which in fact is a distinguished 
choice for $N$. A different choice leads to a graded derivation $d_N$ of degree one, with 
\bez
              d_N^2 =0 \, , 
\eez
and a different formalism of gauge theory can be built with it. Furthermore, we have 
\bez
              d \, d_N = - d_N \, d \, ,
\eez
hence $(\bigwedge(\cM),d,d_N)$ is a \emph{bi-differential graded algebra}. Moreover, if $N_1$ and $N_2$ 
are two different tensor fields of type $(1,1)$, such that all their Fr\"olicher-Nijenhuis brackets 
(see Section~\ref{sec:FN}) vanish, we have
\bez
        d_{N_1}^2 =0 \, ,  \quad    d_{N_2}^2 =0 \, ,  \quad   d_{N_1} \, d_{N_2} = - d_{N_2} \, d_{N_1} \, ,
\eez 
so that $(\bigwedge(\cM),d_{N_1},d_{N_2})$ is a bi-differential graded algebra.

This makes contact with an approach, which has been called \emph{bi-differential calculus}, 
to -- in the sense\footnote{The key here is that the nonlinear equation arises as the integrability condition 
of a system of linear equations.} 
of Lax \cite{Lax68} -- completely integrable nonlinear partial differential and/or difference equations, 
developed since the turn of the last century \cite{DMH00a,DMH20dc}. Given a bi-differential graded algebra, there 
are certain integrable systems associated with it (see Section~\ref{sec:zerocurv}). 
An exploration of  Fr\"olicher-Nijenhuis bi-differential graded algebras within this approach has not 
been carried out before, and the present work intends to fill this gap. On this route, we recover chiral field equations, 
self-dual Yang-Mills equations and generalizations, in a new way. Moreover, and perhaps more importantly, 
we also obtain new non-autonomous versions of these equations.

We note that a type $(1,1)$ tensor with vanishing Nijenhuis torsion also plays a central role in the theory of 
(in a different sense ``integrable'') hydrodynamic-type systems, see \cite{Tsar91,Lore+Magri05}, for example. 
In bi-Hamiltonian systems, (an operator generalization of) such an object shows up as a recursion operator, 
see, e.g., \cite{GVY08} and references cited there.\footnote{Given a symplectic structure (type $(0,2)$ tensor field) 
and a Poisson structure (type $(2,0)$), their contraction is a type $(1,1)$ tensor field $N$. In this way, 
Fr\"olicher-Nijenhuis geometry makes its appearance in finite-dimensional bi-Hamiltonian systems as a 
``recursion operator".} 
A comprehensive exploration of (Fr\"olicher-) Nijenhuis geometry and its applications has recently been 
undertaken by Bolsinov, Konyaev and Matveev, see in particular \cite{BKM22,BKM24}.

This work requires only a basic knowledge of differential geometry. 
In Section~\ref{sec:FN} we recall some results of Fr\"olicher-Nijenhuis theory, which are used 
in the further sections of this work.  
Section~\ref{sec:zerocurv} exploits a parameter-dependent zero curvature condition in the framework of 
Fr\"olicher-Nijenhuis theory, following general ideas of bi-differential calculus. 
Section~\ref{sec:bDT} specializes a general binary Darboux transformation result \cite{DMH13SIGMA} 
of bi-differential calculus to the latter framework.  
Section~\ref{sec:N1=id} deals with the case, where one of the two type $(1,1)$ tensor fields is the identity. 
Section~\ref{sec:N1,N2} provides examples, where both tensor fields are different from the identity. 
Here we make contact with potential forms of the self-dual Yang-Mills equation and generalizations. 
Finally, Section~\ref{sec:conclusions} contains some additional remarks.

\section{Basics of Fr\"olicher-Nijenhuis theory}
\label{sec:FN}

In the following, $\cM$ denotes a smooth  $n$-dimensional manifold and all tensor fields will also be assumed to be smooth (i.e., $C^\infty$).
Let $N$ be a tensor field of type $(1,1)$ on $\cM$. In local coordinates (and using the summation convention),
\bez
               N = N^\mu{}_\nu \, dx^\nu \otimes \frac{\partial}{\partial x^\mu} \, .
\eez
It determines a $C^\infty(\cM)$-linear map $\mathbb{X} \rightarrow \mathbb{X}$, where $\mathbb{X}$ is the space of vector fields on $\cM$ via
\bez 
           NX = N( X^\nu \frac{\partial}{\partial x^\nu} ) = X^\nu \, N(\frac{\partial}{\partial x^\nu} ) 
                                       = X^\nu \, N^\mu{}_\nu \, \frac{\partial}{\partial x^\mu} \, ,
\eez
and also a $C^\infty(\cM)$-linear map $N^\ast : \bigwedge^1(\cM) \rightarrow \bigwedge^1(\cM)$ 
of the space of differential 1-forms on $\cM$ via 
\bez
              N^\ast \alpha(X) := \alpha(N X) = (NX) \ins \alpha =: (N \ins \alpha)(X)  \, .
\eez
In local coordinates,
\bez
      N^\ast \alpha =  N^\ast ( \alpha_\mu \, dx^\mu ) = \alpha_\mu \, N^\ast dx^\mu  
                                                           = \alpha_\mu \, N^\mu{}_\nu \, dx^\nu \, .
\eez
The first appearance of $\ins$ in the preceding equation is the usual insertion of a vector field in a differential form (interior product), which is a \emph{graded} derivation on the algebra $\bigwedge(\cM)$ of differential forms.   
$N \ins$ is defined as insertion of the vector field part of the type $(1,1)$ tensor $N$ in a differential form, which is a derivation:
\bez
        N \ins (\alpha \wedge \beta) = (N \ins \alpha) \wedge \beta + \alpha \wedge N \ins \beta 
\eez
for any differential forms $\alpha$ and $\beta$. 

Associated with any  type $(1,1)$ tensor field $N$ is a graded derivation $d_N$ of degree one \cite{Froe+Nije56}. 
For any differential form $\omega$, 
\bez
          d_N \omega = N \ins d \omega - d (N \ins \omega) \, ,
\eez
where $d$ is the exterior derivative.  $d_N$ has the property 
\bez
         d \, d_N = - d_N \, d   \, .
\eez
If $N$ is the identity, so that $N^\mu{}_\nu = \delta^\mu_\nu$, then $d_N = d$. 

For a function $f$, we have 
\bez
            d_N f = N \ins df = N^\ast df \, . 
\eez
For a 1-form $\alpha = \alpha_\mu \, dx^\mu$,
\bez
           d_N \alpha = \big( N^\kappa{}_{[\mu} \, \alpha_{\nu],\kappa} + \alpha_\kappa \, N^\kappa{}_{[\mu,\nu]} \big) 
                                     \, dx^\mu \wedge dx^\nu \, ,
\eez
where a comma, followed by a greek index, indicates a partial derivative with respect to the corresponding 
coordinate, and square brackets mean antisymmetrization.

If  $N_1,N_2$ are two tensor fields of type $(1,1)$, we set
\bez
            N_1 N_2 := N_1^\mu{}_\lambda \,   N_2^\lambda{}_\nu \, dx^\nu \otimes \frac{\partial}{\partial x^\mu} \, .
\eez
For 1-forms, we have $(N_1 N_2) \ins \alpha = N_2 \ins (N_1 \ins \alpha)$, i.e., $(N_1 N_2)^\ast \alpha =  N_2^\ast  N_1^\ast \alpha$.

For  two type $(1,1)$ tensor fields, the following holds,
\be
       [d_{N_1} , d_{N_2}] = d_{ [N_1,N_2]_{\mathrm{FN}}} \, ,    \label{d_N1,d_N2_commutator}
\ee
with the \emph{Fr\"olicher-Nijenhuis bracket} 
\bez
    [N_1,N_2]_{\mathrm{FN}}(X,Y) &:=& (N_1 N_2 + N_2 N_1)[X,Y] + [N_1X,N_2Y] +[N_2X,N_1Y] \\
      & &   - N_1 ( [N_2X,Y] + [X,N_2Y] ) - N_2 ( [N_1X,Y] + [X,N_1Y] ) \, ,
\eez
where $X$ and $Y$ are any vector fields on $\cM$. The bracket $[N_1,N_2]_{\mathrm{FN}}$ is bilinear 
and symmetric. The right hand side of (\ref{d_N1,d_N2_commutator}) is a graded derivation of degree 
two, see \cite{Froe+Nije56} for its definition. For this work, it is sufficient to know that it vanishes if 
$[N_1,N_2]_{\mathrm{FN}} = 0$.

In local coordinates, we have
\bez
    [N_1,N_2]_{\mathrm{FN}}(\frac{\pa}{\pa x^\mu},\frac{\pa}{\pa x^\nu}) 
  = 2 \Big( N_1{}^\kappa{}_{[\mu} \, N_2{}^\lambda{}_{\nu],\kappa} + N_2{}^\kappa{}_{[\mu} \, N_1{}^\lambda{}_{\nu],\kappa}
     + N_1{}^\lambda{}_{\kappa} \, N_2{}^\kappa{}_{[\mu,\nu]} + N_2{}^\lambda{}_{\kappa} \, N_1{}^\kappa{}_{[\mu,\nu]}
     \Big) \, \frac{\pa}{\pa x^\lambda} \, .
\eez

The \emph{Nijenhuis torsion} of a type $(1,1)$ tensor field $N$ is the type $(1,2)$ tensor field
\bez
     T(N)(X,Y) := \frac{1}{2} [N,N]_{\mathrm{FN}}(X,Y) = [NX,NY] + N \Big( N [X,Y] - [NX,Y] - [X,NY] \Big)  \, .
\eez

\begin{proposition}[\cite{Froe+Nije56}]
\label{prop:d_N^2=0}
$d_N^2 =0$ if and only if $T(N)=0$. \hfill $\Box$
\end{proposition}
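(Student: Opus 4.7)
The plan is to deduce the proposition directly from the commutator identity (\ref{d_N1,d_N2_commutator}) by specializing to $N_1 = N_2 = N$. Since $d_N$ is a graded derivation of odd degree one, the graded commutator unpacks as $[d_N,d_N] = 2 d_N^2$. On the other hand, by the definition of Nijenhuis torsion, $[N,N]_{\mathrm{FN}} = 2\,T(N)$. Putting these two observations together with (\ref{d_N1,d_N2_commutator}) yields
\bez
     2\, d_N^2 \;=\; d_{[N,N]_{\mathrm{FN}}} \;=\; d_{2\,T(N)} \, ,
\eez
so that $d_N^2$ coincides (up to a factor) with the graded derivation of degree two associated with $T(N)$ via the Fr\"olicher--Nijenhuis correspondence.

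The ``if'' direction is then immediate. Assume $T(N) = 0$. Then $[N,N]_{\mathrm{FN}} = 0$, and the remark made just after (\ref{d_N1,d_N2_commutator}) — that the degree-two derivation on the right of (\ref{d_N1,d_N2_commutator}) vanishes whenever its vector-valued form argument does — gives $2\, d_N^2 = 0$, hence $d_N^2 = 0$.

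For the ``only if'' direction one needs the reverse: a vanishing derivation forces its associated vector-valued 2-form to vanish. Rather than invoking the full Fr\"olicher--Nijenhuis correspondence theorem, I would verify this concretely by evaluating $d_N^2$ on an arbitrary smooth function $f$, using the local-coordinate formulas already recorded in this section. One has $d_N f = N^\kappa{}_\nu f_{,\kappa}\, dx^\nu$, and then applying the displayed formula for $d_N$ on a 1-form (with $\alpha_\nu = N^\kappa{}_\nu f_{,\kappa}$) gives, after expanding and using the symmetry $f_{,\rho\kappa} = f_{,\kappa\rho}$ to kill the term quadratic in $N$ that multiplies second derivatives of $f$, an expression of the form
\bez
     d_N^2 f \;=\; \Big( N^\kappa{}_{[\mu} N^\lambda{}_{\nu],\kappa} + N^\lambda{}_\kappa N^\kappa{}_{[\mu,\nu]} \Big) f_{,\lambda}\, dx^\mu \wedge dx^\nu \, ,
\eez
whose coefficient is precisely (up to an overall constant) the local component $T(N)^\lambda{}_{\mu\nu}$ read off from the displayed formula for $[N_1,N_2]_{\mathrm{FN}}$ with $N_1 = N_2 = N$. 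Demanding $d_N^2 f = 0$ for every $f$ — in particular for the coordinate functions $x^\lambda$ — forces $T(N)^\lambda{}_{\mu\nu} = 0$, i.e.\ $T(N) = 0$.

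The main obstacle is the ``only if'' direction, since the ``if'' direction is a bookkeeping consequence of (\ref{d_N1,d_N2_commutator}). The delicate point in the coordinate calculation is the cancellation of the term $N^\kappa{}_{[\mu} N^\rho{}_{\nu]} f_{,\rho\kappa}$, which is antisymmetric in $\mu,\nu$ but multiplies the symmetric tensor $f_{,\rho\kappa}$; relabelling dummy indices and using the symmetry of second partial derivatives makes it vanish, leaving only the Nijenhuis-torsion combination. Once this cancellation is performed the argument is complete, and testing on coordinate functions gives the desired converse.
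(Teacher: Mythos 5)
Your argument is correct, and it is worth noting that the paper itself supplies no proof of this proposition at all: it is stated with a citation to Fr\"olicher--Nijenhuis and closed with a box, so there is nothing in the text to compare against step by step. Your ``if'' direction is exactly the mechanism the paper itself relies on one proposition later (Proposition~\ref{rem:d_N1,d_N2_bi-diff} is proved by combining Proposition~\ref{prop:d_N^2=0} with (\ref{d_N1,d_N2_commutator})), the only implicit point being that the bracket on the left of (\ref{d_N1,d_N2_commutator}) is the \emph{graded} commutator, so that for two degree-one derivations it equals the anticommutator and $[d_N,d_N]=2d_N^2$; the paper's remark that the right-hand side is a degree-two derivation, together with its statement $d_{N_1}d_{N_2}=-d_{N_2}d_{N_1}$, confirms this reading. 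Your ``only if'' direction is an honest and correct coordinate computation: with $\alpha_\nu=N^\lambda{}_\nu f_{,\lambda}$ the term $N^\kappa{}_{[\mu}N^\lambda{}_{\nu]}f_{,\lambda\kappa}$ indeed dies under the relabelling $\kappa\leftrightarrow\lambda$ combined with the symmetry of second partials, the surviving coefficient is $\tfrac12 T(N)^\lambda{}_{\mu\nu}$ by the displayed local formula for the Fr\"olicher--Nijenhuis bracket with $N_1=N_2=N$ and $T(N)=\tfrac12[N,N]_{\mathrm{FN}}$, and testing against the coordinate functions $x^\lambda$ forces each component to vanish. It is also worth observing that since $d_N^2$ is itself a graded derivation (being half the graded commutator of a derivation with itself), vanishing on functions and on the exact one-forms $d_Nf$ would suffice to conclude $d_N^2=0$ everywhere, but your route through the general correspondence plus the function-level check is perfectly adequate and self-contained given the formulas recorded in Section~\ref{sec:FN}.
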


\begin{proposition}
\label{rem:d_N1,d_N2_bi-diff}
For two different tensor fields $N_1,N_2$ of type $(1,1)$, satisfying $[N_i,N_j]_{\mathrm{FN}}=0$, $i,j=1,2$,   
the corresponding graded derivations of degree one, $d_{N_1}$ and $d_{N_2}$, satisfy 
\bez
        d_{N_1}^2 = 0 = d_{N_2}^2 \, , \qquad d_{N_1} \, d_{N_2} = - d_{N_2} \, d_{N_1} \, ,
\eez
and thus supply the de Rham algebra of differential forms with the structure 
of a \emph{double} or \emph{bi-differential graded algebra}, also called a \emph{bi-differential calculus} \cite{DMH00a}.\footnote{This is in particular a cochain double complex (cochain bi-complex).}
\end{proposition}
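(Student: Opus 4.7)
The plan is to derive all three identities as immediate consequences of Proposition~\ref{prop:d_N^2=0} together with the commutation identity (\ref{d_N1,d_N2_commutator}), using only the hypothesis $[N_i,N_j]_{\mathrm{FN}} = 0$ for $i,j \in \{1,2\}$ and the remark preceding the statement that the degree-two derivation on the right-hand side of (\ref{d_N1,d_N2_commutator}) vanishes whenever its associated tensor does.

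First I would handle the nilpotency statements $d_{N_i}^2 = 0$. Since the Nijenhuis torsion satisfies $T(N_i) = \tfrac{1}{2}[N_i,N_i]_{\mathrm{FN}}$ by definition, the assumption $[N_i,N_i]_{\mathrm{FN}} = 0$ gives $T(N_i) = 0$ for $i = 1,2$, and Proposition~\ref{prop:d_N^2=0} then yields the two identities $d_{N_1}^2 = 0 = d_{N_2}^2$.

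Next I would establish the anticommutation relation $d_{N_1} d_{N_2} + d_{N_2} d_{N_1} = 0$ from (\ref{d_N1,d_N2_commutator}). The hypothesis $[N_1,N_2]_{\mathrm{FN}} = 0$ makes the right-hand side a graded derivation of degree two built from the zero tensor, which by the remark following (\ref{d_N1,d_N2_commutator}) is identically zero. On the left-hand side, one only needs to unfold the convention that, since $d_{N_1}$ and $d_{N_2}$ are both graded derivations of odd degree one, the graded commutator $[d_{N_1},d_{N_2}]$ is in fact the anticommutator $d_{N_1} d_{N_2} + d_{N_2} d_{N_1}$, giving the claimed identity.

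There is essentially no obstacle in this argument: the content of the proposition is already packaged in Proposition~\ref{prop:d_N^2=0} and formula (\ref{d_N1,d_N2_commutator}), and the proof is a one-line combination of them. The only subtle point to be careful about is the graded sign convention in (\ref{d_N1,d_N2_commutator}); once that is correctly read as an anticommutator for odd-degree derivations, the statement that $(\bigwedge(\cM), d_{N_1}, d_{N_2})$ forms a bi-differential graded algebra (equivalently, a cochain bi-complex) follows immediately.
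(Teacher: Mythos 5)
Your proof is correct and follows exactly the paper's own (one-line) argument: the nilpotency of each $d_{N_i}$ comes from Proposition~\ref{prop:d_N^2=0} via $T(N_i)=\tfrac12[N_i,N_i]_{\mathrm{FN}}=0$, and the anticommutation comes from reading (\ref{d_N1,d_N2_commutator}) as a graded (anti)commutator of two odd derivations with vanishing right-hand side. Your version merely spells out the sign convention the paper leaves implicit.
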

\begin{proof}
This follows immediately from Proposition~\ref{prop:d_N^2=0} and (\ref{d_N1,d_N2_commutator}).
\end{proof}

\section{Zero curvature condition and integrable PDEs}
\label{sec:zerocurv}
Let $N_i$, $i=1,2$, be two tensor fields of type $(1,1)$ on a manifold $\cM$, satisfying $[N_i,N_j]_{\mathrm{FN}}=0$, $i,j=1,2$, 
and $(\bigwedge(\cM), d_{N_1},d_{N_2})$ the corresponding bi-differential graded algebra. 
Let us introduce covariant exterior derivatives
\bez
         D_{N_1} := d_{N_1} - B \, , \qquad D_{N_2} := d_{N_2} - A \, ,
\eez
acting from the left on $(C^\infty(\cM))^m$, with complex $m \times m$ matrices of 1-forms, 
$A$ and $B$.
The latter may be regarded as local gauge potentials (pullback with respect to a local cross section) of a 
connection 1-form on a $GL(m,\mathbb{C})$ principal bundle over $\cM$, with respect to the 
differential graded algebra $(\bigwedge(\cM), d_{N_1})$, respectively $(\bigwedge(\cM), d_{N_2})$. 
In an obvious way, the covariant exterior derivatives extend to $\mathbb{C}$-linear maps 
$(\bigwedge^r(\cM))^m \otimes_{\mathbb{R}} \mathbb{C} \rightarrow (\bigwedge^{r+1}(\cM))^m \otimes_{\mathbb{R}} \mathbb{C}$, for $r=0,1,2,\ldots$, which constitutes 
a cochain bi-complex if the two connections have vanishing curvature and anticommute, i.e., if
\be
          D_{N_1}^2 = 0 \, , \qquad  D_{N_2}^2 = 0  \, , \qquad  D_{N_1} D_{N_2} = - D_{N_2} D_{N_1} \, .
                                \label{bi-complex_cond}
\ee

A zero curvature condition, somewhat generalized, is well-known to underly (completely) integrable partial 
differential equations. In general, in this respect it is effective only if it depends on a (``spectral'') parameter $\lambda$.
In fact, the conditions (\ref{bi-complex_cond}) are equivalent to
\be
          ( D_{N_1} + \lambda \, D_{N_2} )^2 = 0 \qquad \forall \lambda \, .      \label{bi-complex_cond_lambda}
\ee 

Using the definitions of the two covariant exterior derivatives, (\ref{bi-complex_cond}) is seen to be equivalent to
\be
         d_{N_1} B = B \wedge B  \, , \qquad  d_{N_2} A = A \wedge A  \, , \qquad 
         d_{N_1} A +  d_{N_2} B - A \wedge B - B \wedge A = 0 \, .    \label{zero_curv_A,B}
\ee

The gauge potentials $A$ and $B$ are not determined by (\ref{bi-complex_cond}), since they are subject to gauge 
transformations, which can be used to restrict the form of $A$ and $B$ without loss of generality. 
For example, if $d_{N_1} = d$, the exterior derivative, it is well known that $B=0$ can be achieved locally by 
a gauge transformation. For degenerate $N_1$, this is not true. 
Nevertheless, we will impose the condition
\be
            B = 0    \label{B=0}
\ee
in the following, in order to somewhat simplify our explorations. It is important to note that we cannot achieve simultaneously $A=0$. 
By breaking the gauge invariance of (\ref{zero_curv_A,B}) via imposing the condition (\ref{B=0}), it reduces to
\be
        d_{N_1} A = 0  \, , \qquad  d_{N_2} A = A \wedge A  \, .  \label{A-eqs}
\ee

Next we consider two reductions of (\ref{A-eqs}). Imposing 
\be
           A = d_{N_1} \phi \, ,    \label{A=dN1phi}
\ee
with an $m \times m$ matrix of scalars, (\ref{A-eqs}) takes the form
\be
        d_{N_2} d_{N_1} \phi = d_{N_1} \phi \wedge d_{N_1} \phi  \, ,  \label{phi_eq}
\ee
In arbitrary coordinates $x^\mu$, $\mu=1,\ldots,n$, this reads   
\be
        N_2{}^\kappa{}_{[\mu} \, ( N_1{}^\lambda{}_{\nu]} \, \phi_{,\lambda} )_{,\kappa} 
             + N_1{}^\kappa{}_\lambda \, N_2{}^\lambda{}_{[\mu,\nu]} \, \phi_{,\kappa}
       = \frac{1}{2} N_1{}^\kappa{}_\mu \, N_1{}^\lambda{}_\nu \, [\phi_{,\kappa} \, , \phi_{,\lambda}] \, .
                      \label{phi_N1_N2_eq}
\ee

Alternatively, setting 
\be
           A = (d_{N_2} g) \, g^{-1} \, ,  \label{A=(dN2g)g^-1}
\ee
with an invertible $m \times m$ matrix of scalars, (\ref{A-eqs}) results in 
\be
              d_{N_1} ( (d_{N_2} g) \, g^{-1}) = 0  \, .   \label{g_eq}
\ee
In coordinates $x^\mu$, this reads
\be
        N_1{}^\kappa{}_{[\mu} \, \Big( N_2{}^\lambda{}_{\nu]} \, g_{,\lambda} \, g^{-1} \Big)_{,\kappa} 
       +  N_2{}^\kappa{}_\lambda \, N_1{}^\lambda{}_{[\mu,\nu]} \, g_{,\kappa}  \, g^{-1}  = 0 \, . 
                               \label{g_N1_N2_eq}
\ee

The two equations (\ref{phi_eq}) and (\ref{g_eq}) are related by the Miura-type transformation
\bez
          (d_{N_2} g) \, g^{-1} = d_{N_1} \phi \, ,
\eez
which has both, (\ref{phi_eq}) and (\ref{g_eq}), as integrability conditions. In local coordinates,
\bez
          N_2{}^\mu{}_\nu \, g_{,\mu} \, g^{-1} =  N_1{}^\mu{}_\nu \, \phi_{,\mu} \, .
\eez

\begin{remark}
\label{rem:symm}
(\ref{phi_eq}) is invariant under $\phi \mapsto \phi + \phi_0$, where $\phi_0$ is any $d_{N_1}$-constant $m \times m$ 
matrix of $0$-forms. (\ref{g_eq}) is invariant under $g \mapsto g \, g_0$, with any invertible $d_{N_2}$-constant 
$m \times m$ matrix $g_0$ of $0$-forms. \hfill $\Box$
\end{remark}

\begin{remark}
An equivalent expression for the bi-complex conditions (\ref{bi-complex_cond_lambda}) with $B=0$ is
\be
        [ N_2{}^\kappa{}_\mu \, \partial_\kappa - A_\mu + \lambda \, N_1{}^\kappa{}_\mu \, \partial_\kappa \, , \, 
         N_2{}^\kappa{}_\nu \, \partial_\kappa - A_\nu + \lambda \, N_1{}^\kappa{}_\nu \, \partial_\kappa ] = 0 
       \qquad  \mu,\nu = 1,\ldots, n \, .     \label{0curv_system}
\ee
\hfill $\Box$
\end{remark}

\subsection{A chain of conservation laws}
Let the bi-complex conditions (\ref{bi-complex_cond}) hold. 
We define a \emph{conservation law} to be given by an element 
$\rho \in (\bigwedge^1(\cM))^m \otimes_{\mathbb{R}} \mathbb{C}$ such that
\bez
          D_{N_1} \rho = 0 \qquad \mbox{and} \qquad D_{N_2}  \rho = 0 \, .
\eez
Let $\chi_0 \in (C^\infty(\cM))^m \otimes_{\mathbb{R}} \mathbb{C}$ satisfy $D_{N_1} D_{N_2} \chi_0 = 0$. Then 
\bez
                       \rho_1 := D_{N_2} \chi_0
\eez
is $D_{N_2}$-closed, but also $D_{N_1}$-closed since
\bez
         D_{N_1} \rho_1 = D_{N_1} D_{N_2} \chi_0 = 0 \, .
\eez
If this implies the existence of $\chi_1 \in (C^\infty(\cM))^m \otimes_{\mathbb{R}} \mathbb{C}$ such that 
$\rho_1 = D_{N_1} \chi_1$, 
then 
\bez
           \rho_2 := D_{N_2}  \chi_1
\eez 
is $D_{N_2} $-closed and we also have
\bez
          D_{N_1}  \rho_2 = D_{N_1}  D_{N_2}  \chi_1 = - D_{N_2}  D_{N_1}  \chi_1 = - D_{N_2}  \rho_1 = 0 \, ,
\eez
and so forth, leading to a chain of conservation laws
\bez
       \rho_{k+1} = D_{N_1}  \chi_{k+1} = D_{N_2}  \chi_k  \quad \qquad k=0,1,2, \ldots \, .
\eez 
In terms of 
\bez
          \chi := \sum_{k=0}^\infty \lambda^k \, \chi_k   \, ,
\eez
with a parameter $\lambda$, we have the linear equation
\bez
        D_{N_1}  \chi = \lambda \, D_{N_2} \chi  \, .    
\eez
If the bi-complex conditions (\ref{bi-complex_cond}) are equivalent to some PDE, 
then the above construction typically yields an 
infinite chain of conservation laws for this PDE. 
Also see \cite{BIZZ79,PSW79,Magr+Moro84,DMH96int,DMH97int,DMH00a}.

\section{Binary Darboux transformations}
\label{sec:bDT}

Bi-differential calculus \cite{DMH00a,DMH20dc} has been developed more generally for associative graded 
algebras. We recall a useful result \cite{DMH13SIGMA}, here restricted to the case of a Fr\"olicher-Nijenhuis 
bi-differential graded algebra.

\begin{theorem}
\label{thm:bDT}
 Let $N_1$ and $N_2$ be two tensor fields of type $(1,1)$ on an $n$-dimensional manifold $\cM$, with 
$[N_i, N_j]_{\mathrm{FN}}=0$, $i,j=1,2$, so that $(\bigwedge(\cM), d_{N_1},d_{N_2})$ is a bi-differential graded algebra.
Let $r \times r$ matrices $Q, P$ of $0$-forms and $r \times r$ matrices $\bsy{\kappa},\bsy{\lambda}$ of $1$-forms  satisfy  
\be
  &&   d_{N_2} P - [\bsy{\kappa} , P] = P \,  d_{N_1} P  \, , \hspace{1.15cm}
       d_{N_2} \bsy{\kappa} - \bsy{\kappa} \wedge \bsy{\kappa} = P \,  d_{N_1} \bsy{\kappa}  \, , \nonumber \\ 
  &&   d_{N_2} Q + [\bsy{\lambda} , Q] = ( d_{N_1} Q) \, Q \, , \qquad  
       d_{N_2} \bsy{\lambda} + \bsy{\lambda} \wedge \bsy{\lambda} = ( d_{N_1} \bsy{\lambda}) \, Q  \, .   
                    \label{Delta,lambda,Gamma,kappa_eqs}
\ee
Furthermore, let $m \times r$, respectively $r \times m$, matrices $\theta, \eta$ of $0$-forms be solutions of the 
homogeneous linear systems
\be
    d_{N_2} \theta &=& A \, \theta + (d_{N_1} \theta) \, Q + \theta \, \bsy{\lambda}  \, ,  \label{lin_sys1} \\
    d_{N_2} \eta &=& - \eta \, A + P \, d_{N_1} \eta +\bsy{\kappa} \, \eta \, ,  \label{lin_sys2}
\ee
where $A$ is an $m \times m$ matrix of 1-forms solving (\ref{A-eqs}). 
If $\Omega$ is an $r \times r$ matrix solution of the (compatible) inhomogeneous linear equations
\be
    && P \, \Omega - \Omega \, Q = \eta \, \theta  \, , \label{Sylv}  \\
    && d_{N_2} \Omega = (d_{N_1} \Omega) \, Q - (d_{N_1} P) \, \Omega + (d_{N_1} \eta) \, \theta 
            + \bsy{\kappa} \, \Omega + \Omega \, \bsy{\lambda} \, ,
            \label{d_NOmega}
\ee
then  
\be
                    A' := A - d_{N_1} (\theta \, \Omega^{-1} \eta)   \label{new_A}
\ee
also solves (\ref{A-eqs}) on any open set of $\cM$, where $\Omega$ is invertible.  \hfill $\Box$
\end{theorem}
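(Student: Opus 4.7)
The plan is to verify, one equation at a time, the two pieces of (\ref{A-eqs}) for $A'$ in place of $A$. Set $\omega := \theta\,\Omega^{-1}\eta$, so that $A' = A - d_{N_1}\omega$. The first equation is trivial,
\bez
d_{N_1}A' = d_{N_1}A - d_{N_1}^2\omega = 0,
\eez
using the hypothesis $d_{N_1}A = 0$ and $d_{N_1}^2 = 0$ from Proposition~\ref{rem:d_N1,d_N2_bi-diff}. All the content therefore lies in the second equation $d_{N_2}A' = A' \wedge A'$.

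For that equation, my first move is to expand both sides and rearrange. Combining $d_{N_2}A = A \wedge A$, the anticommutation $d_{N_2}d_{N_1} = -d_{N_1}d_{N_2}$ of Proposition~\ref{rem:d_N1,d_N2_bi-diff}, and $d_{N_1}A = 0$ via the graded Leibniz rule, a short manipulation rewrites the identity to be established as
\bez
d_{N_1}\bigl[\, d_{N_2}\omega - A\,\omega + \omega\,A - \omega\,d_{N_1}\omega \,\bigr] = 0.
\eez
So it suffices to show that the $1$-form in brackets is $d_{N_1}$-closed.

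To reach this, I would apply the graded Leibniz rule to $\omega = \theta\,\Omega^{-1}\eta$, using $d_{N_i}\Omega^{-1} = -\Omega^{-1}(d_{N_i}\Omega)\,\Omega^{-1}$, and then substitute the explicit expressions for $d_{N_2}\theta$, $d_{N_2}\eta$ and $d_{N_2}\Omega$ provided by (\ref{lin_sys1}), (\ref{lin_sys2}) and (\ref{d_NOmega}); the parallel expansion of $d_{N_1}\omega$ is governed by $d_{N_1}\Omega$, which also appears as one of the ingredients of (\ref{d_NOmega}). The decisive identity is the Sylvester-type constraint (\ref{Sylv}): rewritten as
\bez
\Omega^{-1}\eta\,\theta\,\Omega^{-1} = \Omega^{-1}P - Q\,\Omega^{-1},
\eez
it eliminates the $\Omega^{-1}\eta\,\theta\,\Omega^{-1}$ sandwich produced by the $\omega\,d_{N_1}\omega$ term in favour of pieces involving $P$ and $Q$ alone. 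The residual contributions containing $P$, $Q$, $\bsy{\kappa}$, $\bsy{\lambda}$ then cancel against each other by (\ref{Delta,lambda,Gamma,kappa_eqs}).

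The hard part is pure bookkeeping: the expansion produces on the order of a dozen terms and the signs coming from the graded Leibniz rule and from $d_{N_i}\Omega^{-1}$ must be tracked carefully. A much shorter route, which I would in fact take, is to note that the statement is precisely the specialization to our setting of the general binary Darboux transformation for associative bi-differential graded algebras proved in \cite{DMH13SIGMA}: since Proposition~\ref{rem:d_N1,d_N2_bi-diff} guarantees that $(\bigwedge(\cM),d_{N_1},d_{N_2})$ is such an algebra, the theorem follows immediately, and the direct computation sketched above serves only as a self-contained sanity check.
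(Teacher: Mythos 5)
Your proposal is correct and ultimately takes the same route as the paper: the paper gives no independent proof of Theorem~\ref{thm:bDT} but presents it as the specialization of the general binary Darboux transformation theorem of \cite{DMH13SIGMA} to the Fr\"olicher--Nijenhuis bi-differential graded algebra guaranteed by Proposition~\ref{rem:d_N1,d_N2_bi-diff}, which is exactly your ``shorter route.'' Your supplementary direct-computation sketch is also sound in its key reduction (the identity $d_{N_2}A' - A'\wedge A' = d_{N_1}\bigl[\,d_{N_2}\omega - A\,\omega + \omega\,A - \omega\,d_{N_1}\omega\,\bigr]$ follows correctly from $d_{N_2}A = A\wedge A$, $d_{N_1}A=0$ and the anticommutation of $d_{N_1},d_{N_2}$), though the remaining bookkeeping is not carried out here any more than in the paper.
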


\begin{remark}
The 1-forms $\bsy{\kappa}, \bsy{\lambda}$ represent a certain freedom in the formalism, which turned 
out to be helpful, since simplifying, in some examples of integrable systems treated in the 
bi-differential calculus framework, see \cite{CDMH16}, for example. 
Setting $\bsy{\kappa} = \bsy{\lambda} =0$, (\ref{Delta,lambda,Gamma,kappa_eqs}) reduces to
\bez
       d_{N_2} P = P \,  d_{N_1} P  \, , \qquad
      d_{N_2} Q = ( d_{N_1} Q) \, Q \, , 
\eez
which are generalizations of matrix versions of the Riemann (or Hopf, or inviscid Burgers) equation.
Choosing instead $\bsy{\kappa} = d_{N_1} P$, $\bsy{\lambda} = d_{N_1} Q$, (\ref{Delta,lambda,Gamma,kappa_eqs}) 
reduces to 
\bez
         d_{N_2} P = (d_{N_1} P) \, P  \, , \qquad
         d_{N_2} Q = Q \, d_{N_1} Q \, .
\eez
There are many more possibilities, of course. The choice modifies the linear systems (\ref{lin_sys1}),  
(\ref{lin_sys2}), and it will therefore be useful for 
comparison with linear systems found in different ways. \hfill $\Box$
\end{remark}

\begin{corollary}
Let all assumptions of Theorem~\ref{thm:bDT} hold with (\ref{A=dN1phi}). Then,  starting with a ``seed" solution 
$\phi$ of (\ref{phi_eq}),  
\be
               \phi' = \phi - \theta \, \Omega^{-1} \, \eta   \label{phi_new}
\ee
is a new solution of (\ref{phi_eq}).
\end{corollary}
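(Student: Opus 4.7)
The plan is to deduce the corollary directly from Theorem~\ref{thm:bDT} by observing that the ansatz (\ref{A=dN1phi}) is preserved by the Darboux transformation (\ref{new_A}). Concretely, I would first verify that the hypothesis ``$A$ solves (\ref{A-eqs})'' in Theorem~\ref{thm:bDT} is met by $A := d_{N_1}\phi$ whenever $\phi$ solves (\ref{phi_eq}): the equation $d_{N_1}A = d_{N_1}^2 \phi = 0$ holds automatically by Proposition~\ref{prop:d_N^2=0}, and $d_{N_2} A = A \wedge A$ is precisely $d_{N_2} d_{N_1}\phi = d_{N_1}\phi \wedge d_{N_1}\phi$, i.e.\ (\ref{phi_eq}).

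Next, I would apply Theorem~\ref{thm:bDT} to this seed $A$ under the stated hypotheses on $P, Q, \bsy{\kappa}, \bsy{\lambda}, \theta, \eta, \Omega$. The theorem produces
\[
   A' = A - d_{N_1}(\theta \, \Omega^{-1} \, \eta)
\]
which again solves (\ref{A-eqs}). Using $\mathbb{C}$-linearity of $d_{N_1}$, I would rewrite this as
\[
   A' = d_{N_1}\phi - d_{N_1}(\theta \, \Omega^{-1} \, \eta) = d_{N_1}\bigl( \phi - \theta \, \Omega^{-1} \, \eta \bigr) = d_{N_1} \phi',
\]
so the transformed gauge potential is again of the form (\ref{A=dN1phi}), now with $\phi'$ given by (\ref{phi_new}).

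Finally, I would run the reduction of Section~\ref{sec:zerocurv} in reverse: substituting $A' = d_{N_1}\phi'$ into (\ref{A-eqs}), the first equation is automatic and the second becomes (\ref{phi_eq}) with $\phi$ replaced by $\phi'$. Hence $\phi'$ solves (\ref{phi_eq}) on every open set on which $\Omega$ is invertible, completing the argument.

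There is essentially no analytic obstacle here; the proof is bookkeeping built on Theorem~\ref{thm:bDT}. The only step that warrants a moment's care is the observation that the Darboux increment $d_{N_1}(\theta \, \Omega^{-1} \, \eta)$ is manifestly $d_{N_1}$-exact, so the substitution ansatz (\ref{A=dN1phi}) is stable under the transformation without any cohomological hypothesis on $\cM$. This is what turns the gauge-potential statement of Theorem~\ref{thm:bDT} into the clean scalar transformation rule (\ref{phi_new}).
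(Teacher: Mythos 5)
Your proposal is correct and follows essentially the same route as the paper: apply Theorem~\ref{thm:bDT} with $A = d_{N_1}\phi$ and use linearity of $d_{N_1}$ to see that $A'$ is again $d_{N_1}$-exact with potential $\phi - \theta\,\Omega^{-1}\eta$. The only cosmetic difference is that the paper reads (\ref{new_A}) as determining $\phi'$ up to a $d_{N_1}$-constant $\phi_0$ and then invokes the symmetry of Remark~\ref{rem:symm} to set $\phi_0=0$, whereas you sidestep this by defining $\phi'$ directly and checking $A' = d_{N_1}\phi'$; both amount to the same one-line computation.
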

\begin{proof}
(\ref{new_A}) takes the form $d_{N_1}(\phi' - \phi + \theta \, \Omega^{-1} \eta)=0$, hence 
$\phi' = \phi - \theta \, \Omega^{-1} \, \eta + \phi_0$, where $d_{N_1}\phi_0 = 0$. We can achieve $\phi_0=0$ 
by a symmetry of (\ref{phi_eq}), see Remark~\ref{rem:symm}.
\end{proof}

\begin{corollary}
Let the assumptions of Theorem~\ref{thm:bDT} hold with (\ref{A=(dN2g)g^-1}) and let $P$ be invertible. 
Then, given a solution $g$ of  (\ref{g_eq}),  
\be
            g' = (I - \theta \,  \Omega^{-1} P^{-1} \eta) \, g   \label{g_new}
\ee
is a new solution of (\ref{g_eq}). 
\end{corollary}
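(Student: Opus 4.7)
The plan is to show that $g' = (I - \theta\,\Omega^{-1}P^{-1}\eta)\,g$ satisfies $(d_{N_2} g')(g')^{-1} = A'$, where $A' = A - d_{N_1}(\theta\,\Omega^{-1}\eta)$ is the transformed gauge potential from Theorem~\ref{thm:bDT}. Once this identification is in hand, the first equation in (\ref{A-eqs}) for $A'$, which Theorem~\ref{thm:bDT} guarantees, reads $d_{N_1}((d_{N_2}g')(g')^{-1}) = 0$, which is exactly (\ref{g_eq}) for $g'$. So the task collapses to a gauge-transformation identity between $A$ and $A'$, implemented by $\psi := I - \theta\,\Omega^{-1} P^{-1}\eta$.

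To establish this identity, write $g' = \psi g$ and use $d_{N_2} g = A g$ (the defining relation (\ref{A=(dN2g)g^-1})) together with the graded Leibniz rule to get $(d_{N_2}g')(g')^{-1} = (d_{N_2}\psi)\,\psi^{-1} + \psi\,A\,\psi^{-1}$. Setting this equal to $A - d_{N_1}(\theta\,\Omega^{-1}\eta)$ and multiplying by $\psi$ from the right reduces the claim to the purely algebraic statement
\[
  d_{N_2}(\theta\,\Omega^{-1}P^{-1}\eta) \;=\; [A,\,\theta\,\Omega^{-1}P^{-1}\eta] \;+\; d_{N_1}(\theta\,\Omega^{-1}\eta)\,\psi \, .
\]
I would verify this by expanding the left-hand side with the Leibniz rule and the identity $d_{N_i}(X^{-1}) = -X^{-1}(d_{N_i}X)X^{-1}$ for $X\in\{\Omega,P\}$, and then substituting $d_{N_2}\theta, d_{N_2}\eta, d_{N_2}\Omega, d_{N_2}P$ from (\ref{lin_sys1}), (\ref{lin_sys2}), the first equation of (\ref{Delta,lambda,Gamma,kappa_eqs}), and (\ref{d_NOmega}).

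The main obstacle will be the bookkeeping: numerous terms involving $\bsy{\kappa}$, $\bsy{\lambda}$, and $d_{N_1}$-derivatives of $P$, $Q$, $\Omega$, $\theta$, $\eta$ will proliferate after these substitutions. The Sylvester constraint (\ref{Sylv}), $P\Omega - \Omega Q = \eta\,\theta$, will be pivotal for converting factors of $\eta\,\theta$ (and hence $P^{-1}\eta\,\theta$) into combinations of $P$, $Q$, $\Omega$, so that the $\bsy{\kappa}$- and $\bsy{\lambda}$-dependent contributions coalesce: those surviving after the cancellations between (\ref{lin_sys1})-(\ref{lin_sys2}) on one side and (\ref{d_NOmega}) on the other reassemble, via a reverse application of the Leibniz rule, into precisely $d_{N_1}(\theta\,\Omega^{-1}\eta)\,\psi$, while the remaining $A$-terms build the commutator. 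This is essentially the specialisation of the general bi-differential calculus binary Darboux argument of \cite{DMH13SIGMA} to the Fr\"olicher-Nijenhuis setting, and once the identity is verified the corollary follows.
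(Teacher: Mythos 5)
Your proposal is correct and follows essentially the same route as the paper: the paper's proof likewise establishes $(d_{N_2}g')\,{g'}^{-1} = (d_{N_2}g)\,g^{-1} - d_{N_1}(\theta\,\Omega^{-1}\eta) = A'$ by combining (\ref{lin_sys1}), (\ref{lin_sys2}), the $d_{N_2}P$ equation and (\ref{Sylv}), and then concludes via $d_{N_1}A'=0$ from Theorem~\ref{thm:bDT}. Your reduction to the algebraic identity for $d_{N_2}(\theta\,\Omega^{-1}P^{-1}\eta)$ is exactly the computation the paper leaves implicit (and you correctly note that (\ref{d_NOmega}) is also needed for the $d_{N_2}\Omega$ term, which the paper does not list explicitly).
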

\begin{proof} 
Using (\ref{lin_sys1}), (\ref{lin_sys2}), the equation for $d_{N_2} P$ in (\ref{Delta,lambda,Gamma,kappa_eqs}), and (\ref{Sylv}), with $g'$ given by (\ref{g_new}) we obtain
\bez
         (d_{N_2} g') \, {g'}^{-1} = (d_{N_2} g) \, g^{-1} - d_{N_1}(\theta \,  \Omega^{-1} \eta) 
       = A - d_{N_1}(\theta \,  \Omega^{-1} \eta)  = A'  \, ,
\eez 
using (\ref{new_A}) in the last step.
\end{proof}

According to the preceding results, the two equations (\ref{phi_eq}) and (\ref{g_eq}) are integrable in the 
sense that they are the integrability condition of both linear systems (\ref{lin_sys1}) and (\ref{lin_sys2}). 
Moreover, they possess a binary Darboux transformation to generate new from given solutions.

\begin{remark}
(\ref{Sylv}) is a matrix \emph{Sylvester equation}. Assuming that $Q$ and $P$ have no eigenvalue 
in common, the Sylvester equation has a unique solution.  If this ``spectrum condition" holds, then (\ref{d_NOmega}) is 
automatically satisfied. If the spectrum condition is violated, the Sylvester equation leads to constraints for the 
solution of the linear systems and does not completely determine $\Omega$. But this is then taken 
into account by (\ref{d_NOmega}).  
See \cite{Manas96} for the important example of dark solitons of the defocusing NLS equation. 
\hfill $\Box$
\end{remark}

In local coordinates, (\ref{Delta,lambda,Gamma,kappa_eqs}) has the form
\bez
  &&  N_2{}^\nu{}_\mu \, P_{,\nu} - [ \bsy{\kappa}_\mu,P] = N_1{}^\nu{}_\mu \, P P_{,\nu} \, , \quad
          N_2{}^\nu{}_\mu \, Q_{,\nu} + [ \bsy{\lambda}_\mu , Q ] = N_1{}^\nu{}_\mu \, Q_{,\nu}  \, Q \, , \\
  &&  N_2{}^\kappa{}_{[\mu} \, \bsy{\kappa}_{\nu],\kappa} + N_2{}^\kappa{}_{[\mu,\nu]}  \, \bsy{\kappa}_\kappa
           - \frac{1}{2} [ \bsy{\kappa}_\mu , \bsy{\kappa}_\nu ] 
     = P \, ( N_1{}^\kappa{}_{[\mu} \, \bsy{\kappa}_{\nu],\kappa} + N_1{}^\kappa{}_{[\mu,\nu]} \, \bsy{\kappa}_\kappa ) \, , \\
   &&  N_2{}^\kappa{}_{[\mu} \, \bsy{\lambda}_{\nu],\kappa} + N_2{}^\kappa{}_{[\mu,\nu]} \, \bsy{\lambda}_\kappa 
           + \frac{1}{2} [ \bsy{\lambda}_\mu , \bsy{\lambda}_\nu ] 
      = ( N_1{}^\kappa{}_{[\mu} \, \bsy{\lambda}_{\nu],\kappa} + N_1{}^\kappa{}_{[\mu,\nu]} \, \bsy{\lambda}_\kappa ) \, Q \, , 
\eez
where we wrote $\bsy{\kappa} = \bsy{\kappa}_\mu \, dx^\mu$ and $\bsy{\lambda} = \bsy{\lambda}_\mu \, dx^\mu$.
The linear systems (\ref{lin_sys1}) and (\ref{lin_sys2}) are
\bez
             \theta_{,\nu} \, ( N_2{}^\nu{}_\mu \, I - N_1{}^\nu{}_\mu \, Q ) 
          = A_\mu \, \theta  + \theta \bsy{\lambda}_\mu \, , \qquad
            ( N_2{}^\nu{}_\mu \, I -  N_1{}^\nu{}_\mu \, P ) \, \eta_{,\nu} 
          = - \eta \, A_\mu + \bsy{\kappa}_\mu \, \eta  \, ,
\eez
where $I$ is the $r \times r$ identity matrix and
\bez
                   A_\mu = N_1{}^\kappa{}_\mu \, \phi_{,\kappa} \, ,
\eez
respectively
\bez
                 A_\mu = N_2{}^\nu{}_\mu \, g_{,\nu} g^{-1} \, .
\eez
Finally, (\ref{d_NOmega}) takes the form
\bez
       N_2{}^\nu{}_\mu \, \Omega_{,\nu} = N_1{}^\nu{}_\mu \, \Omega_{,\nu} \, Q - N_1{}^\nu{}_\mu \, P_{,\nu} \, \Omega
              + N_1{}^\nu{}_\mu \, \eta_{,\nu} \, \theta + \bsy{\kappa}_\mu \Omega + \Omega \bsy{\lambda}_\mu \, .
\eez

\section{The case where one of the two tensor fields is the identity}
\label{sec:N1=id}

We choose $N_1$ as the identity on $\mathbb{X}$, so that $d_{N_1} = d$, the exterior derivative. We  
write $N_2 = N$ and assume that it has vanishing Nijenhuis torsion, i.e.,
\bez
   N^\kappa{}_{[\mu} \, N^\lambda{}_{\nu],\kappa} + N^\kappa{}_{[\mu,\nu]}  \, N^\lambda{}_{\kappa} = 0 \, .
\eez
(\ref{phi_N1_N2_eq}) and (\ref{g_N1_N2_eq}) now take the form
\be
     N^\kappa{}_{[\mu} \, \phi_{,\nu] \kappa} + N^\kappa{}_{[\mu,\nu]} \, \phi_{,\kappa}
     = \phi_{,[\mu} \, \phi_{,\nu]} = \frac{1}{2} [\phi_{,\mu} , \phi_{,\nu} ] \, ,    \label{phi_N_eq}
\ee
respectively
\be
      (g_{,\lambda}  \, g^{-1})_{,[\mu}  \, N^\lambda{}_{\nu]} - g_{,\lambda}  \, g^{-1} \, N^\lambda{}_{[\mu,\nu]} = 0 \, . 
                               \label{g_N_eq}
\ee

\begin{remark}
If $N$ is diagonalizable in the sense that there exist (real) coordinates in which 
$N^\mu{}_\nu = f_\nu \,  \delta^\mu_\nu$ 
with distinct functions $f_\nu$, $\nu=1,\ldots,n$, of the coordinates $x^1,\ldots,x^n$,  then vanishing 
Nijenhuis torsion requires that $f_\nu$ only depends on the single coordinate $x^\nu$. 
In this case, we have $d_N \phi = f_\nu \, \phi_{,\nu} \,  dx^\nu$,
which implies the \emph{linear} equation $d \, d_N \phi = 0$, hence (\ref{phi_eq}) is then not an interesting equation. 
Therefore we shall assume that $N$ is not diagonalizable.  \hfill $\Box$
\end{remark}

\begin{example}
If the matrix $(N^\mu{}_\nu)$ is constant in a coordinate system $x^\mu$, its Nijenhuis torsion vanishes. 
(\ref{phi_N_eq}) and (\ref{g_N_eq}) then reduce to
\bez
      N^\kappa{}_\mu  \, \phi_{,\nu \kappa}  -  N^\kappa{}_\nu \, \phi_{,\mu \kappa}  =  [\phi_{,\mu} , \phi_{,\nu} ] 
     \, ,  \qquad 
    (g_{,\lambda}  \, g^{-1})_{,[\mu}  \, N^\lambda{}_{\nu]} = 0   \, ,
\eez
respectively.  \hfill $\Box$
\end{example}

Let us turn to the conditions in the binary Darboux transformation theorem.  
(\ref{Delta,lambda,Gamma,kappa_eqs}) now reads
\bez
  &&  ( N^\nu{}_\mu \, I - \delta^\nu_\mu \, P ) \, P_{,\nu} - [ \bsy{\kappa}_\mu,P] = 0 \, , \quad
          Q_{,\nu} \, (  N^\nu{}_\mu \, I -  \delta^\nu_\mu \, Q ) + [ \bsy{\lambda}_\mu , Q ] = 0 \, , \\
  &&  N^\kappa{}_{[\mu} \, \bsy{\kappa}_{\nu],\kappa} + N^\kappa{}_{[\mu,\nu]}  \, \bsy{\kappa}_\kappa
           - \frac{1}{2} [ \bsy{\kappa}_\mu , \bsy{\kappa}_\nu ] + P \, \bsy{\kappa}_{[\mu,\nu]} = 0 \, , \\
  &&  N^\kappa{}_{[\mu} \, \bsy{\lambda}_{\nu],\kappa} + N^\kappa{}_{[\mu,\nu]} \, \bsy{\lambda}_\kappa 
           + \frac{1}{2} [ \bsy{\lambda}_\mu , \bsy{\lambda}_\nu ] + \bsy{\lambda}_{[\mu,\nu]} \, Q = 0 \, , 
\eez
and the linear systems (\ref{lin_sys1}) and (\ref{lin_sys2}) are reduced to
\bez
   \theta_{,\nu} \, (N^\nu{}_\mu \, I - \delta^\nu_\mu \, Q) = A_\mu \, \theta + \theta \, \bsy{\lambda}_\mu \, , \qquad
   (N^\nu{}_\mu \, I - \delta^\nu_\mu \, P) \, \eta_{,\nu} = - \eta \, A_\mu + \bsy{\kappa}_\mu \, \eta  \, .
\eez
where 
\bez
                   A_\mu = \phi_{,\mu} \, ,
\eez
respectively
\bez
                 A_\mu = N^\nu{}_\mu \, g_{,\nu} g^{-1} \, .
\eez
Finally, (\ref{d_NOmega}) takes the form
\bez
       N^\nu{}_\mu \, \Omega_{,\nu} = \Omega_{,\mu} \, Q - P_{,\mu} \, \Omega
              + \eta_{,\mu} \, \theta + \bsy{\kappa}_\mu \Omega + \Omega \bsy{\lambda}_\mu \, .
\eez
In addition, we still have the Sylvester equation (\ref{Sylv}), of course.

\begin{remark}
Further integrable systems are obtained by choosing $N_2$ as the identity and $N_1$ different from it. 
 \hfill $\Box$
\end{remark}

\subsection{Integrable nonlinear matrix PDEs in two dimensions}
Let $n=2$ and, in some coordinate system,
\bez
          (N^\mu{}_\nu) = \left( \begin{array}{cc} a & -b \\
                                                                          b & a   \end{array} \right)  \, ,
\eez
with (real) functions $a,b$, related by $\mathrm{tr}(N)= N^\mu{}_\mu =2a$ and 
$\mathrm{tr}(N^2) = N^\mu{}_\nu N^\nu{}_\mu = 2(a^2-b^2)$ to scalars. 
Then $T(N)=0$ if and only if $a$ and $b$ satisfy the Cauchy-Riemann (CR) equations
\bez
         a_{,1} = b_{,2} \, , \qquad a_{,2} = - b_{,1} \, .
\eez
Solutions are obtained by decomposing any holomorphic function $h(z)$ of a single complex variable 
\bez
              z=x^1 + \imag  x^2 \, ,
\eez
on an open subset of $\mathbb{C}$, into real part $a$ and imaginary part $b$, i.e.,
\bez
                 h(z) = a(x^1,x^2) + \imag b(x^1,x^2) \, .
\eez
In two dimensions, the components $N^\mu{}_\nu$ of any torsion-free tensor field $N$ with two 
distinct complex (not real) eigenvalues at each point can be transformed to the above form 
(see, e.g., \cite{BKM22}, also for corresponding results in higher dimensions). 
We note that $h$ and its complex conjugate $h^\ast$ are the eigenvalues of $N$.

For a function $f$, we have
\bez
            d_N f = (a \, f_{,1} + b \, f_{,2} ) \, dx^1 +  (a \, f_{,2} - b \, f_{,1} ) \, dx^2 \, .
\eez
By use of the CR equations, (\ref{phi_N_eq}) now takes the form
\be
            b \, \Delta \phi = [ \phi_{,1} , \phi_{,2} ] \, ,   \label{pdcm}
\ee
with the Euclidean Laplacian $\Delta$ in two dimensions, and (\ref{g_N_eq}) becomes
\bez
        \Big( ( a \, g_{,1} + b \, g_{,2} ) \, g^{-1} \Big)_{,2} -  \Big( ( a \, g_{,2} - b \, g_{,1} ) \, g^{-1} \Big)_{,1} = 0 \, ,
\eez
which, by using the CR equations, reduces to
\be
    b \, \Big( (g_{,1} \, g^{-1} )_{,1} +  (g_{,2} \, g^{-1} )_{,2} \Big) = a \, [ g_{,1} \, g^{-1} ,  g_{,2} \, g^{-1} ]   \, . \label{cm}
\ee
The Miura transformation reads
\bez
    (a \, g_{,1} + b \, g_{,2}) \, g^{-1} = \phi_{,1} \, , \qquad
    (a \, g_{,2} - b \, g_{,1} )  \, g^{-1} = \phi_{,2} \, .
\eez

Theorem~\ref{thm:bDT} supplies  (\ref{pdcm}) and also (\ref{cm}) with a binary Darboux transformation. 
After decoupling of systems of PDEs into systems of first order ODEs, we obtain the following result from
Theorem~\ref{thm:bDT}.

\begin{corollary}
\label{cor:cm}
Let $a$ and $b \neq 0$ be real and imaginary part, respectively, of any holomorphic function (of $z=x^1 + \imag x^2$). 
Let $r \times r$ matrices $P,Q,\bsy{\kappa}_1,\bsy{\kappa}_2, \bsy{\lambda}_1, \bsy{\lambda}_2$ be solutions of
\bez
        \big(  (P-a \, I)^2 + b^2 I \big) \, P_{,1} &=& -(P-a I) [\bsy{\kappa}_1,P] - b \, [\bsy{\kappa}_2,P] \, , \\
        \big( (P-a \, I)^2 + b^2 I \big) \, P_{,2} &=&  -(P-a I) [\bsy{\kappa}_2,P] + b \, [\bsy{\kappa}_1,P]\, , \\
       Q_{,1}  \big( (Q -a \, I)^2 + b^2 I \big) &=& [ \bsy{\lambda}_1,Q] (Q-a I) + b \, [ \bsy{\lambda}_2,Q] \, , \\
       Q_{,2}  \big( (Q -a \, I)^2 + b^2 I \big) &=&  [\bsy{\lambda}_2,Q] (Q-a I) - b \, [\bsy{\lambda}_1,Q] \, ,
\eez
and
\bez
   &&  b \, (\bsy{\kappa}_{1,1} + \bsy{\kappa}_{2,2} )
    + (P - a \, I) \, (\bsy{\kappa}_{1,2} - \bsy{\kappa}_{2,1}) - [\bsy{\kappa}_1,\bsy{\kappa}_2] = 0 \, , \\
   && b \, (\bsy{\lambda}_{1,1} + \bsy{\lambda}_{2,2} )
    + (\bsy{\lambda}_{1,2} - \bsy{\lambda}_{2,1}) \, (Q - a \, I) + [\bsy{\lambda}_1,\bsy{\lambda}_2] = 0 \, .
\eez
Furthermore, let $m \times r$, respectively $r \times m$ matrices $\theta$ and $\eta$ be solutions 
of the linear systems
\bez
       \theta_{,1} \big( (Q-a I)^2 + b^2 I \big) &=& 
    - \theta \, \big( b \bsy{\lambda}_2 + \bsy{\lambda}_1 (Q- a I)\big) - A_1 \theta (Q- a I) - b A_2 \theta  \\
      \theta_{,2} \big( (Q-a I)^2 + b^2 I \big) &=& 
        \theta \, \big( b \bsy{\lambda}_1 - \bsy{\lambda}_2 (Q- a I) \big) + b A_1 \theta - A_2 \theta  (Q- a I) \, , \\
    \big( (P-a \, I)^2 + b^2 I \big) \, \eta_{,1} &=& - \big( b \, \bsy{\kappa}_2 + (P-a \, I) \, \bsy{\kappa}_1 \big) \, \eta 
                    + (P-a I) \eta A_1 + b \, \eta A_2    \, , \\
      \big( (P-a \, I)^2 + b^2 I \big) \, \eta_{,2} &=&  \big( b \, \bsy{\kappa}_1 - (P-a \, I) \, \bsy{\kappa}_2 \big) \, \eta 
                  - b \, \eta A_1 + (P - a I) \eta A_2 \, ,
\eez
where either
\bez
            A_1 = \phi_{,1} \, , \qquad A_2 = \phi_{,2} \, ,
\eez
with a solution $\phi$ of (\ref{pdcm}), or
\bez
          A_1 = a \, g_{,1} g^{-1} + b \, g_{,2} \, g^{-1} \, , \qquad
           A_2 = a \, g_{,2} \, g^{-1} - b \, g_{,1} g^{-1} \, ,
\eez
with a solution $g$ of (\ref{cm}). 
Moreoever, let $\Omega$ be an $r \times r$ matrix solution of
\bez
     &&    P \Omega - \Omega Q = \eta \theta \, , \\
     &&  \Omega_{,1} \, \big( (Q-a \, I)^2 + b^2 I \big) \, 
              = \big( P_{,1} \, \Omega  - \eta_{,1} \, \theta -  \bsy{\kappa}_1 \Omega - \Omega  \bsy{\lambda}_1 \big) (Q - a \, I)  
             + b \, \big( P_{,2} \, \Omega - \eta_{,2} \, \theta - \bsy{\kappa}_2 \Omega - \Omega \, \bsy{\lambda}_2 \big)  \, , \\
    && \Omega_{,2} \, \big( (Q-a \, I)^2 + b^2 I \big) 
             =  \big( P_{,2} \, \Omega -   \eta_{,2} \, \theta -  \bsy{\kappa}_2 \Omega - \Omega \, \bsy{\lambda}_2 \big) (Q - a \, I) 
                - b \, \big( P_{,1} \, \Omega  - \eta_{,1} \, \theta  - \bsy{\kappa}_1 \Omega - \Omega \, \bsy{\lambda}_1 \big) .
\eez
Then
\bez
          \phi' = \phi - \theta \, \Omega^{-1} \eta \, , \qquad
           g' = (I - \theta \, \Omega^{-1} P^{-1} \eta) \, g
\eez
also solves (\ref{pdcm}), respectively (\ref{cm}), on any open set of $\mathbb{R}^2$, where $\Omega$ is invertible.  
In the second case, we assume that $P$ is invertible. \hfill $\Box$
\end{corollary}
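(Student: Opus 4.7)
The plan is to obtain Corollary~\ref{cor:cm} as a direct coordinate specialization of Theorem~\ref{thm:bDT} and the two corollaries that follow it, applied with $N_1 = I$ and $N_2 = N$ the matrix displayed just before (\ref{pdcm}). First I would verify the hypotheses of Theorem~\ref{thm:bDT}. A short calculation from the defining formula for $[N_1,N_2]_{\mathrm{FN}}$ shows that $[I,N]_{\mathrm{FN}} = 0$ for any $(1,1)$-tensor $N$, so the only nontrivial bracket condition is $[N,N]_{\mathrm{FN}} = 2\,T(N) = 0$. Substituting $N^1{}_1 = N^2{}_2 = a$, $N^1{}_2 = -b$, $N^2{}_1 = b$ into the coordinate expression for the Nijenhuis torsion from Section~\ref{sec:FN} shows that $T(N) = 0$ is precisely the Cauchy--Riemann system for $(a,b)$, which is built into the holomorphy assumption. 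Hence Theorem~\ref{thm:bDT} and its corollaries apply.

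Next I would write each of (\ref{Delta,lambda,Gamma,kappa_eqs}), (\ref{lin_sys1}), (\ref{lin_sys2}), (\ref{Sylv}), (\ref{d_NOmega}) in coordinates. Their general forms for $N_1 = I$ are already displayed in Section~\ref{sec:N1=id}; inserting the explicit entries of $N$ and noting that the Cauchy--Riemann equations force $N^\kappa{}_{[\mu,\nu]} = 0$, one reads off immediately the stated $\bsy{\kappa}$- and $\bsy{\lambda}$-equations and sees that the Sylvester equation (\ref{Sylv}) is unchanged. The remaining $P$-, $Q$-, $\theta$-, $\eta$- and $\Omega$-equations each emerge instead as a coupled pair of the schematic form
\begin{eqnarray*}
  X_{,1}\, M + b\, X_{,2} &=& U_1, \\
  -b\, X_{,1} + X_{,2}\, M &=& U_2,
\end{eqnarray*}
with $M = aI - Q$ (for $\theta$, $\Omega$, $Q$), or with the unknown acted on from the left rather than the right and $M = aI - P$ (for $\eta$, $P$).

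The main computational step --- and the one that produces the operator $(Q-aI)^2 + b^2 I$ (resp.\ $(P-aI)^2 + b^2 I$) appearing throughout the corollary --- is the decoupling of each such pair. Right-multiplying the first line by $M$ and adding $b$ times the second yields $X_{,1}(M^2 + b^2 I) = U_1\, M - b\,U_2$, and an analogous combination gives $X_{,2}(M^2 + b^2 I) = U_2\, M + b\,U_1$. Expanding $U_1,U_2$ from the original right-hand sides and using $M^2 = (Q-aI)^2$ (resp.\ $(P-aI)^2$) reproduces verbatim the displayed equations of the corollary. Applying this pattern uniformly to all five unknowns --- with the appropriate left/right reversal for $\eta$ and $P$ --- completes the translation; the hypothesis $b \neq 0$ is what makes the elimination possible (and ensures $N$ is not globally diagonalizable, cf.\ the remark preceding the two-dimensional discussion).

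Finally, the two stated coordinate expressions for $A_1,A_2$ are just (\ref{A=dN1phi}) and (\ref{A=(dN2g)g^-1}) written out, and the transformations $\phi' = \phi - \theta\Omega^{-1}\eta$ and $g' = (I - \theta\Omega^{-1}P^{-1}\eta)\,g$ are the outputs of the two corollaries following Theorem~\ref{thm:bDT}, specialized to this setting. The main ``obstacle'' is not conceptual but bookkeeping: one must consistently track which matrices multiply from the left versus from the right through each decoupling, since the Sylvester and $\Omega$-equations couple both sides simultaneously.
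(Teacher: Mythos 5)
Your proposal is correct and takes essentially the same route as the paper, which likewise obtains this corollary by specializing Theorem~\ref{thm:bDT} (and its two corollaries) to $N_1=I$ and the given $N$, then ``decoupling'' the coupled first-order systems via exactly the combination that produces the operators $(Q-aI)^2+b^2I$ and $(P-aI)^2+b^2I$. One trivial wording slip: to arrive at $X_{,1}(M^2+b^2I)=U_1M-bU_2$ one must \emph{subtract} (not add) $b$ times the second line after right-multiplying the first by $M$; the formula you display is nevertheless the correct one, so nothing is affected.
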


In terms of the complex variable $z$ and its conjugate $\bar{z}=x^1-\imag x^2$, (\ref{pdcm}) and (\ref{cm}) read
\be
         \mathrm{Im}(h) \,  \phi_{,z \bar{z}} = - \frac{\imag}{2} \, [ \phi_{,z} , \phi_{,  \bar{z}} ] \, ,   \label{pdcm_z}
\ee
respectively
\be
     \mathrm{Im}(h) \, \Big( (g_{,z} \, g^{-1} )_{,\bar{z}} +  (g_{,\bar{z}} \, g^{-1} )_{,z} \Big) 
               = - \imag  \mathrm{Re}(h) \, [ g_z \, g^{-1} ,  g_{,\bar{z}} \, g^{-1} ]   \, .       \label{cm_z}
\ee
These equations are invariant under holomorphic transformations of $z$. 
If $a = \mathrm{Re}(h) =0$, the last equation has the form of the principal chiral model equation, 
which is formally obtained by 
replacing $z$ and $\bar{z}$ with real independent variables. Corollary~\ref{cor:cm} now takes the following 
simpler form, writing $A = A_z \, dz + A_{\bar{z}} \, d \bar{z}$, $\bsy{\kappa} = \bsy{\kappa}_z \, dz 
+ \bsy{\kappa}_{\bar{z}} \, d \bar{z}$ and $\bsy{\lambda} = \bsy{\lambda}_z \, dz 
+ \bsy{\lambda}_{\bar{z}} \, d \bar{z}$.

\begin{theorem}
Let $\mathcal{U} \subset \mathbb{C}$ be symmetric about the real axis and $h(z)$ a holomorphic function 
on $\mathcal{U}$. 
Let $r \times r$ matrices $P,Q, \bsy{\kappa}_z,\bsy{\kappa}_{\bar{z}}, \bsy{\lambda}_z, \bsy{\lambda}_{\bar{z}}$ 
be solutions of
\bez
  &&  \big( P- h I \big) \, P_{,z} = - [\bsy{\kappa}_z , P]  \, , \quad
     \big( P- h^\ast I \big) \, P_{,\bar{z}} = - [\bsy{\kappa}_{\bar{z}} , P]  \, , \\
  &&  Q_{,z} \, \big( Q- h I \big) = [\bsy{\lambda}_z , Q] \, , \hspace{.7cm}
      Q_{,\bar{z}} \, \big( Q- h^\ast I \big) = [\bsy{\lambda}_{\bar{z}} , Q] \, , \\
  &&  \big( P - h^\ast I \big) \, \bsy{\kappa}_{z,\bar{z}} - \big( P - h I \big) \, \bsy{\kappa}_{\bar{z},z} 
           - [  \bsy{\kappa}_z ,  \bsy{\kappa}_{\bar{z}} ] = 0 \, , \\
  &&  \bsy{\lambda}_{z,\bar{z}} \, \big( Q- h^\ast I) - \bsy{\lambda}_{\bar{z},z} \, \big( Q- h I) 
         + [ \bsy{\lambda}_z , \bsy{\lambda}_{\bar{z}} ] = 0 \, ,
\eez
where $P- h I, P- h^\ast I, Q- h I, Q- h^\ast I$ are assumed to be invertible.
Furthermore, let $m \times r$, respectively $r \times m$ matrices $\theta$ and $\eta$ be solutions 
of the linear systems\footnote{These are matrix versions of the standard Lax pair for the chiral field equation \cite{Zakh+Mikh78rel}.}
\bez
     && \theta_{,z} \, \big( Q - h I \big) 
           = -\theta \, \bsy{\lambda}_z - A_z \, \theta \, , \quad
          \theta_{,\bar{z}} \, \big( Q - h^\ast I \big) 
            = - \theta \, \bsy{\lambda}_{\bar{z}} - A_{\bar{z}} \, \theta \, , \\
     && (P - h I) \, \eta_{,z}  = -  \bsy{\kappa}_z \, \eta + \eta \, A_z \, , \quad
            (P - h^\ast I) \, \eta_{,\bar{z}} = - \bsy{\kappa}_{\bar{z}} \, \eta + \eta \, A_{\bar{z}} \, ,
\eez
where either
\bez
        A_z = \phi_{,z} \, , \qquad A_{\bar{z}} = \phi_{, \bar{z}} \, ,
\eez
with a solution $\phi$ of (\ref{pdcm_z}), or 
\bez
         A_z = h \, g_{,z} \, g^{-1} \, , \qquad  A_{\bar{z}} = h^\ast \, g_{,\bar{z}} \, g^{-1}  \, ,
\eez
with a solution $g$ of (\ref{cm_z}). Moreover, let $\Omega$ be an $r \times r$ matrix solution of the compatible 
inhomogeneous linear equations
\bez
     &&    P \Omega - \Omega Q = \eta \theta \, , \\
    && \Omega_{,z} \, (Q - h I) = P_{,z}  - \eta_{,z} \theta  - \bsy{\kappa}_z \Omega - \Omega \bsy{\lambda}_z \, , \qquad
    \Omega_{,\bar{z}} \, (Q - h^\ast I) 
         = P_{,\bar{z}}  - \eta_{,\bar{z}} \theta  - \bsy{\kappa}_{\bar{z}} \Omega - \Omega \bsy{\lambda}_{\bar{z}} \, .
\eez
If $\Omega$ is invertible on $\mathcal{U}$, then
\bez
          \phi' = \phi - \theta \, \Omega^{-1} \eta \, , \qquad
           g' = (I - \theta \, \Omega^{-1} P^{-1} \eta) \, g
\eez
also solves (\ref{pdcm_z}), respectively (\ref{cm_z}) (assuming $P$ invertible in the second case).
\hfill $\Box$
\end{theorem}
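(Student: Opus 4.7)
\medskip
\noindent\emph{Proof proposal.} The plan is to apply Theorem~\ref{thm:bDT} in the setting of Section~\ref{sec:N1=id} (so $N_1$ is the identity and $N_2 = N$), after passing from the real frame $(x^1,x^2)$ to the complex frame $(z,\bar{z})$, in which $N$ diagonalizes.

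First I would verify that, in the $(z,\bar{z})$ frame, the only non-vanishing components of $N^{\nu}{}_{\mu}$ are $N^{z}{}_{z} = h$ and $N^{\bar{z}}{}_{\bar{z}} = h^{\ast}$. Using $\partial_z = (\partial_1 - \imag \partial_2)/2$ and $\partial_{\bar{z}} = (\partial_1 + \imag \partial_2)/2$, a one-line calculation gives $N\partial_z = h\,\partial_z$ and $N\partial_{\bar{z}} = h^{\ast}\,\partial_{\bar{z}}$; equivalently, $d_N f = h\,f_{,z}\,dz + h^{\ast}\,f_{,\bar{z}}\,d\bar{z}$ for any scalar $f$. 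Holomorphicity of $h$ gives $h_{,\bar{z}} = 0 = h^{\ast}_{,z}$, which forces $T(N)=0$ automatically; the symmetry of $\mathcal{U}$ about the real axis is needed only so that $h^{\ast}$ is defined wherever $h$ is.

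Next I would substitute this diagonal form of $N$ into the local-coordinate expressions collected at the end of Section~\ref{sec:N1=id} and split each identity into its $\mu=z$ and $\mu=\bar{z}$ components. The $P$-equation $(N^{\nu}{}_\mu I - \delta^{\nu}_{\mu} P)\,P_{,\nu} = [\bsy{\kappa}_\mu, P]$ immediately produces $(P-hI)\,P_{,z} = -[\bsy{\kappa}_z,P]$ and $(P-h^{\ast}I)\,P_{,\bar{z}} = -[\bsy{\kappa}_{\bar{z}},P]$; the $Q$-equation is analogous. For the integrability constraints on $\bsy{\kappa}$ and $\bsy{\lambda}$, the terms $N^{\kappa}{}_{[\mu,\nu]}$ survive only as multiples of $h_{,z}$ and $h^{\ast}_{,\bar{z}}$, by holomorphicity, and combine with the $N^{\kappa}{}_{[\mu}\,\bsy{\kappa}_{\nu],\kappa}$ terms to yield exactly the stated second-order identities. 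The linear systems (\ref{lin_sys1}) and (\ref{lin_sys2}) reduce the same way, with $A_\mu = \phi_{,\mu}$ in the first gauge and $A_z = h\,g_{,z}g^{-1}$, $A_{\bar{z}} = h^{\ast}\,g_{,\bar{z}}g^{-1}$ in the second (via $A_\mu = N^{\nu}{}_\mu\,g_{,\nu}g^{-1}$). The Sylvester equation~(\ref{Sylv}) is coordinate-free, and (\ref{d_NOmega}) splits into the two stated equations for $\Omega$.

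Finally, Theorem~\ref{thm:bDT} delivers $A' = A - d(\theta\,\Omega^{-1}\eta)$, from which the formulas $\phi' = \phi - \theta\,\Omega^{-1}\eta$ (using Remark~\ref{rem:symm} to absorb a $d$-constant) and $g' = (I - \theta\,\Omega^{-1}P^{-1}\eta)\,g$ follow just as in the two corollaries of Theorem~\ref{thm:bDT}. The main expected obstacle is pure bookkeeping: keeping signs from the complex change of frame and the various off-diagonal contributions straight. Nothing genuinely new occurs, and the factorization $(Q-aI)^2 + b^2 I = (Q-hI)(Q-h^{\ast}I)$ explains why the two invertibility hypotheses in the theorem replace the single one implicit in Corollary~\ref{cor:cm}.
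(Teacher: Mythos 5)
Your proposal is correct and follows essentially the same route as the paper: the theorem is stated there without a separate proof, as the rewriting of Corollary~\ref{cor:cm} (itself Theorem~\ref{thm:bDT} specialized to $N_1=\mathrm{id}$, $N_2=N$) in the complex frame where $N$ diagonalizes with eigenvalues $h,h^\ast$, and your direct computation in that frame reproduces exactly this, with the factorization $(Q-aI)^2+b^2I=(Q-hI)(Q-h^\ast I)$ accounting for the invertibility hypotheses just as you say. The one discrepancy you would encounter is that the displayed $\Omega$-equations should read $P_{,z}\,\Omega$ and $P_{,\bar z}\,\Omega$ rather than $P_{,z}$ and $P_{,\bar z}$ (compare Corollary~\ref{cor:cm}); that is a typo in the theorem as printed, not a gap in your argument.
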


\begin{remark}
The general solutions of the linear systems have the form $\theta = \theta_0 U$, 
$\eta = V \eta_0$, with invertible $r \times r$ matrices $U,V$ and constant matrices $\theta_0,\eta_0$. 
If $U$ commutes with $Q$, and $V$ with $P$, which is in particular the case if $A_z=A_{\bar{z}}=0$ (zero seed), 
under the assumption of the spectrum condition the Sylvester equation implies $\Omega = V \Omega_0 U$,
where $P \Omega_0 - \Omega_0 Q = \eta_0 \theta_0$. The expressions for the generated 
solutions $\phi'$, respectively $g'$, now show that $U,V$ drop out, so that the resulting solutions are 
constant.  Non-constant solutions can therefore only be generated if not all of 
the commutators $[A_z,  Q]$,  $[A_{\bar{z}},  Q]$, $[A_z,P]$ and $[A_{\bar{z}},  P]$ vanish. 
In particular, we need non-vanishing seed to generate non-constant solutions. 
For constant $h,A_z, A_{\bar{z}}$, examples are easily worked out. 
\hfill $\Box$
\end{remark}

If the holomorphic function $h$ is not constant, then (\ref{pdcm}) and (\ref{cm}) are, to the best of  
our knowledge, new non-autonomous integrable matrix PDEs in two dimensions. Since the linear and the adjoint 
linear system are non-autonomous in this case, it is more difficult, however, to find exact solutions. Further 
explorations are needed.

\subsection{An example in four dimensions}
What do we get if we choose $N$ block-diagonal with $2 \times 2$ blocks of the form considered in the preceding subsection?
 Let $n=4$ and, in some coordinate system, 
\bez
          (N^\mu{}_\nu) = \left( \begin{array}{cccc} a_1 & -b_1 & 0 & 0 \\
                                                                          b_1 & a_1 & 0 & 0 \\
                                                                          0 & 0 & a_2 & -b_2 \\
                                                                          0 & 0 & b_2 & a_2  \end{array} \right)   \, ,
\eez
so that $N$ is block-diagonal with two blocks of the form considered in the preceding subsection. 
This has vanishing Nijenhuis torsion if the CR equations
\bez
         a_{1,1} = b_{1,2} \, , \quad a_{1,2} = - b_{1,1} \, , \quad 
         a_{2,3} = b_{2,4} \, , \quad a_{2,4} = - b_{2,3} \, ,
\eez
hold, and in addition
\bez
     &&   (a_2 - a_1) \, a_{1,3} + b_2 a_{1,4} + b_1 b_{1,3} = 0 \, , \qquad
              (a_2 - a_1) \, a_{1,4} - b_2 a_{1,3} + b_1 b_{1,4} = 0 \, , \\
     &&  (a_2 - a_1) \, b_{1,3} - b_1 a_{1,3} + b_2 b_{1,4} = 0 \, , \qquad
             (a_2 - a_1) \, b_{1,4} - b_1 a_{1,4} - b_2 b_{1,3} = 0 \, , \\
     &&  (a_2 - a_1) \, a_{2,1} - b_1 a_{2,2} - b_2 b_{2,1} = 0 \, , \qquad
             (a_2 - a_1) \, b_{2,1} - b_1 b_{2,2} + b_2 a_{2,1} = 0 \, , \\
     &&  (a_2 - a_1) \, a_{2,2} + b_1 a_{2,1} - b_2 b_{2,2} = 0 \, , \qquad
             (a_2 - a_1) \, b_{2,2} + b_1 b_{2,1} + b_2 a_{2,2} = 0 \, .
\eez
The latter equations are in particular satisfied if $a_1,b_1$ only depend on $x^1,x^2$, and $a_2,b_2$ only 
on $x^3,x^4$. (\ref{phi_N_eq}) results in two expected equations,
\bez
    b_1 \, ( \phi_{,11} + \phi_{,22}) = [\phi_{,1} , \phi_{,2}] \, , \qquad
    b_2 \, ( \phi_{,33} + \phi_{,44}) = [\phi_{,3} , \phi_{,4}] \, ,
\eez
but in addition we get the system of equations
\bez
       b_1 \phi_{,13} + b_2 \phi_{,24} + (a_2 - a_1) \, \phi_{,23} + [\phi_{,2} , \phi_{,3}] &=& 
         - b_{1,3} \phi_{,1} + a_{1,3} \phi_{,2} - a_{2,2} \phi_{,3} - b_{2,2} \phi_{,4} \, , \\
       b_2 \phi_{,13} + b_1 \phi_{,24} - (a_2 - a_1) \, \phi_{,14} - [\phi_{,1} , \phi_{,4}] &=& 
         - a_{1,4} \phi_{,1} - b_{1,4} \phi_{,2} - b_{2,1} \phi_{,3} + a_{2,1} \phi_{,4} \, , \\
      b_1  \phi_{,14} - b_2 \phi_{,23} + (a_2 - a_1) \, \phi_{,24} + [\phi_{,2} , \phi_{,4}] 
     &=& - b_{1,4} \phi_{,1} + a_{1,4} \phi_{,2} + b_{2,2} \phi_{,3} - a_{2,2} \phi_{,4} \, , \\
    b_2  \phi_{,14} - b_1 \phi_{,23} + (a_2 - a_1) \, \phi_{,13} + [\phi_{,1} , \phi_{,3}] 
     &=& a_{1,3} \phi_{,1} + a_{2,1} \phi_{,3} + b_{1,3} \phi_{,2} + b_{2,1} \phi_{,4} \, ,
\eez
which is quite restrictive. 
This conveys an idea of the complexity we typically meet in higher than two dimensions. 
However, there are more interesting possibilities in the Fr\"olicher-Nijenhuis framework, as we 
will show in the next section.

\section{Examples where both tensor fields are different from the identity}
\label{sec:N1,N2}

\subsection{Self-dual Yang-Mills equation and generalizations}

Let $\cM$ now be a $2k$-dimensional manifold. We assume the existence of local coordinates 
$x^1,\ldots,x^k$, $\bar{x}^1,\ldots,\bar{x}^k$, in which the matrices of $N_1$ and $N_2$ have the form
\bez
          N_1 : \left( \begin{array}{cc} (N^\mu{}_\nu) & 0  \\
                                                                         0 & 0  
                                \end{array} \right)  \, , \quad
          N_2 : \left( \begin{array}{cc}   0 & 0 \\
                                         (N^{\bar{\mu}}{}_\nu)  & 0 
                                 \end{array} \right)  \, ,
\eez
with functions $N^\mu{}_\nu, N^{\bar{\mu}}{}_\nu$ on $\cM$. Here a zero stands for a block of zeros 
and an index $\bar{\mu}$ refers to the coordinate $\bar{x}^\mu$. Hence, for a function $f$, 
\bez
           d_{N_1} f = \sum_{\mu,\nu = 1}^n f_{,\mu} \, N^\mu{}_\nu \, dx^\nu \, , \qquad
           d_{N_2} f = \sum_{\mu,\nu = 1}^n  f_{,\bar{\mu}} \, N^{\bar{\mu}}{}_\nu \, dx^\nu \, .
\eez 
The vanishing of $[N_1,N_1]_{\mathrm{FN}}$  amounts to
\bez
     T( (N^\mu{}_\nu)) := T( N^\lambda{}_\kappa \, dx^\kappa \otimes \frac{\pa}{\pa x^\lambda}) = 0 \quad 
     \mbox{and} \quad N^\lambda{}_\kappa \, N^\kappa{}_{\mu, \bar{\nu}} = 0 \, ,
\eez
where a summation convention (e.g., in the last equation a sum over $\kappa=1,\ldots,k$) is understood, 
and $[N_2,N_2]_{\mathrm{FN}} =0$ becomes
\bez
     N^{\bar{\kappa}}{}_{[\mu} \, N^{\bar{\lambda}}{}_{\nu],\bar{\kappa}} = 0 \, ,
\eez
involving a sum over $\bar{\kappa}=\bar{1},\ldots,\bar{k}$. 
The further condition  $[N_1,N_2]_{\mathrm{FN}} = 0$ requires
\bez
   N^\kappa{}_{[\mu} \, N^{\bar{\lambda}}{}_{\nu], \kappa} + N^{\bar{\lambda}}{}_\kappa \, N^\kappa{}_{[\mu,\nu]} = 0 \, , 
  \quad
   N^{\bar{\kappa}}{}_{[\mu} \, N^\lambda{}_{\nu],\bar{\kappa}} = 0 \, . 
\eez 
In the following, we will assume that $(N^\mu{}_\nu)$ is invertible, so that 
\bez
          N^\mu{}_{\nu,\bar{\kappa}} = 0 \, .
\eez
In the chosen coordinates and with the restrictions imposed on $N_1$ and $N_2$, (\ref{phi_N1_N2_eq}) 
and (\ref{g_N1_N2_eq}) take the form
\be
            N^\lambda{}_\nu \, N^{\bar{\kappa}}{}_\mu \, \phi_{,\lambda \bar{\kappa}} 
          - N^\lambda{}_\mu \, N^{\bar{\kappa}}{}_\nu\, \phi_{,\lambda \bar{\kappa}}   
         =  N^\kappa{}_\mu \, N^\lambda{}_\nu \, [ \phi_{,\kappa} , \phi_{,\lambda}] \, ,   \label{phi_N1_N2_eq_special}
\ee
respectively
\be
      N^\kappa{}_\mu \, N^{\bar{\lambda}}{}_\nu \, ( g_{,\bar{\lambda}} \, g^{-1} )_{,\kappa} \, 
      -  N^\kappa{}_\nu \, N^{\bar{\lambda}}{}_\mu \, ( g_{,\bar{\lambda}} \, g^{-1})_{,\kappa}  = 0 \, .   
                           \label{g_N1_N2_eq_special}
\ee

In the case under consideration, we have $A_{\bar{\mu}} = 0$, so that the linear systems enforce 
$\bsy{\kappa}_{\bar{\mu}}  = \bsy{\lambda}_{\bar{\mu}} = 0$. The general binary Darboux transformation 
theorem then reduces to the following.

\begin{theorem}
Let real functions $N^\nu{}_\mu$ and  $N^{\bar{\nu}}{}_\mu$ satisfy
\be
   N^\mu{}_{\nu,\bar{\kappa}} = 0 \, , \quad
   N^\kappa{}_{[\mu} \, N^{\bar{\lambda}}{}_{\nu], \kappa} + N^{\bar{\lambda}}{}_\kappa \, N^\kappa{}_{[\mu,\nu]} = 0 
               \, , \quad
   N^{\bar{\kappa}}{}_{[\mu} \, N^{\bar{\lambda}}{}_{\nu],\bar{\kappa}} = 0  \, .   \label{gsdYM_Nijenhuis}
\ee
Let $r \times r$ matrices $Q, P,\bsy{\kappa}_\mu, \bsy{\lambda}_\mu$ be solutions of
\be
   &&  N^{\bar{\nu}}{}_\mu \, P_{,\bar{\nu}} - [\bsy{\kappa}_\mu , P] = N^\nu{}_\mu \, P P_{,\nu} \, , \quad
           N^{\bar{\nu}}{}_\mu \, Q_{,\bar{\nu}} + [\bsy{\lambda}_\mu , Q] = N^\nu{}_\mu \, Q_{,\nu} Q \, , 
                        \label{gsdYM_P,Q} \\
   && N^{\bar{\kappa}}{}_{[\mu} \, \bsy{\kappa}_{\nu],\bar{\kappa}} 
        - \frac{1}{2} [\bsy{\kappa}_\mu , \bsy{\kappa}_\nu] 
   = P \, ( N^\kappa{}_{[\mu} \, \bsy{\kappa}_{\nu], \kappa} + N^\kappa{}_{[\mu,\nu]} \, \bsy{\kappa}_\kappa ) \, , 
                      \nonumber \\
   && N^{\bar{\kappa}}{}_{[\mu} \, \bsy{\lambda}_{\nu],\bar{\kappa}} 
      + \frac{1}{2} [\bsy{\lambda}_\mu , \bsy{\lambda}_\nu] 
   = ( N^\kappa{}_{[\mu} \, \bsy{\lambda}_{\nu], \kappa} + N^\kappa{}_{[\mu,\nu]} \, \bsy{\lambda}_\kappa ) \, Q \, . 
                   \nonumber
\ee
Let $m \times r$, respectively $r \times m$ matrices $\theta$ and $\eta$ solve the linear systems
\bez
   &&    N^{\bar{\nu}}{}_\mu \, \theta_{,\bar{\nu}} - N^\nu{}_\mu \, \theta_{,\nu} \, Q 
                 = A_\mu \theta + \theta \, \bsy{\lambda}_\mu \, , \qquad
            N^{\bar{\nu}}{}_\mu \, \eta_{,\bar{\nu}} - N^\nu{}_\mu \, P \, \eta_{,\nu} 
                 = -  \eta \, A_\mu + \bsy{\kappa}_\mu \eta \, ,    \label{gsdYM_linsys}
\eez
with
\bez
          A_\mu = N^\nu{}_\mu \, \phi_{,\nu} \, , \quad \mbox{respectively} \quad
          A_\mu = N^{\bar{\nu}}{}_\mu \, g_{,\bar{\nu}} g^{-1} \, ,
\eez
where $\phi$ is a solution of (\ref{phi_N1_N2_eq_special}) and $g$ a solution of (\ref{g_N1_N2_eq_special}).
Furthermore, let $\Omega$ be an $r \times r$ matrix solution of
\bez
      && P \, \Omega - \Omega \, Q = \eta \, \theta \, , \\
      && N^{\bar{\nu}}{}_\mu \, \Omega_{,\bar{\nu}} = N^\nu{}_\mu \, ( \Omega_{,\nu} \, Q - P_{,\nu} \, \Omega 
             + \eta_{,\nu} \, \theta ) + \bsy{\kappa}_\mu \, \Omega + \Omega \, \bsy{\lambda}_\mu  \, .
\eez
Then
\be
           \phi' = \phi - \theta \, \Omega^{-1} \eta \, , \qquad
           g' = (I - \theta \, \Omega^{-1} P^{-1} \eta) \, g \, ,     \label{gsdYM_phi',g'}
\ee
are also solutions of (\ref{phi_N1_N2_eq_special}), respectively (\ref{g_N1_N2_eq_special}), on any open set 
of $\mathbb{R}^{2k}$, where $\Omega$ (and, in the second case, also $P$) is invertible.
\hfill $\Box$
\end{theorem}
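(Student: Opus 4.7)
The proof is a specialization of Theorem~\ref{thm:bDT} and its two corollaries to the particular block form imposed on $N_1$ and $N_2$. The plan has two parts: checking the Fr\"olicher-Nijenhuis compatibilities, and translating the abstract equations into the given coordinates.

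First, I would verify that the conditions \eqref{gsdYM_Nijenhuis}, together with invertibility of $(N^\mu{}_\nu)$, are equivalent to $[N_i, N_j]_{\mathrm{FN}} = 0$ for all $i,j \in \{1,2\}$, so that Proposition~\ref{rem:d_N1,d_N2_bi-diff} yields the bi-differential graded algebra needed to apply Theorem~\ref{thm:bDT}. This is a direct computation using the local coordinate formula for $[N_i, N_j]_{\mathrm{FN}}(\pa_\mu, \pa_\nu)$ recalled in Section~\ref{sec:FN}. The block structure makes most sums collapse: only the components $N_1{}^\mu{}_\nu = N^\mu{}_\nu$ (both indices unbarred) and $N_2{}^{\bar{\mu}}{}_\nu = N^{\bar{\mu}}{}_\nu$ (row index barred, column index unbarred) are nonzero. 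After sorting the surviving terms by whether the free indices are barred or unbarred, one recovers precisely the three displayed conditions in \eqref{gsdYM_Nijenhuis}, with $N^\mu{}_{\nu,\bar{\kappa}} = 0$ emerging from the mixed bracket $[N_1,N_2]_{\mathrm{FN}}$ once invertibility of $(N^\mu{}_\nu)$ is used.

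Second, I would translate each ingredient of Theorem~\ref{thm:bDT} into these coordinates. The key observation, stated in the paragraph preceding the theorem, is that for any scalar $f$,
\[
  d_{N_1} f \;=\; N^\mu{}_\nu \, f_{,\mu} \, dx^\nu \, , \qquad d_{N_2} f \;=\; N^{\bar{\mu}}{}_\nu \, f_{,\bar{\mu}} \, dx^\nu \, ,
\]
so both derivations output 1-forms spanned only by the \emph{unbarred} basis $dx^\nu$. Consequently $A = d_{N_1}\phi$ (respectively $A = (d_{N_2}g)\, g^{-1}$) satisfies $A_{\bar{\mu}} = 0$, and substituting $A_{\bar{\mu}} = 0$ into the barred components of the linear systems \eqref{lin_sys1}, \eqref{lin_sys2} together with the abstract equations \eqref{Delta,lambda,Gamma,kappa_eqs} forces $\bsy{\kappa}_{\bar{\mu}} = \bsy{\lambda}_{\bar{\mu}} = 0$ consistently. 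The unbarred components of \eqref{Delta,lambda,Gamma,kappa_eqs}, \eqref{lin_sys1}, \eqref{lin_sys2}, \eqref{Sylv}, and \eqref{d_NOmega} then reproduce verbatim the equations displayed in the theorem, while the barred components become trivial after using $N_1{}^\mu{}_{\bar{\nu}} = 0 = N_2{}^\mu{}_{\bar{\nu}}$.

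Finally, the formulas \eqref{gsdYM_phi',g'} for the transformed solutions are immediate from the two corollaries of Theorem~\ref{thm:bDT}, and the invertibility provisos on $\Omega$ (and on $P$ in the $g$-case) transfer directly. The only real obstacle is index bookkeeping: one must consistently distinguish which summation indices run over unbarred and which over barred coordinates, and keep track of how each contraction with $N_1, N_2$ projects onto the surviving block. Once this is organized, no computation beyond what is already contained in Theorem~\ref{thm:bDT} is required.
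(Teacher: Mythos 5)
Your proposal follows essentially the same route as the paper: the theorem is obtained by specializing Theorem~\ref{thm:bDT} (and its two corollaries) to the given block forms of $N_1,N_2$, observing that $A_{\bar{\mu}}=0$ forces $\bsy{\kappa}_{\bar{\mu}}=\bsy{\lambda}_{\bar{\mu}}=0$, and writing out the surviving unbarred components of all the equations. The only small slip is attributing $N^\mu{}_{\nu,\bar{\kappa}}=0$ to the mixed bracket; in the paper it follows from $[N_1,N_1]_{\mathrm{FN}}=0$, which gives $N^\lambda{}_\kappa \, N^\kappa{}_{\mu,\bar{\nu}}=0$, combined with the assumed invertibility of $(N^\mu{}_\nu)$.
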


\begin{example}
For vanishing seed, i.e., $A_\mu =0$, and $\bsy{\kappa}_\mu = \bsy{\lambda}_\mu=0$, let $\theta = \theta_0$ 
and $\eta = \eta_0$ be constant.  
Then, besides (\ref{gsdYM_Nijenhuis}), we only have to solve (\ref{gsdYM_P,Q}), which for $k=2$ are (in general 
non-autonomous) matrix versions of the Riemann (or Hopf, inviscid Burgers) equation. 
If $P = \mathrm{diag}(p_1,\ldots,p_r)$ and $Q= \mathrm{diag}(q_1,\ldots,q_r)$ are diagonal, non-constant, 
and have no eigenvalue in common (spectrum condition), 
then $\Omega$ is the Cauchy-like matrix with entries
\bez
         \Omega_{ij} = \frac{   (\eta_0 \,  \theta_0)_{ij}}{p_i - q_j}  \qquad i,j=1,\ldots,r \, ,
\eez
and (\ref{gsdYM_phi',g'}) typically provides us with breaking-wave solutions of (\ref{phi_N1_N2_eq_special}), 
respectively (\ref{g_N1_N2_eq_special}), which blow up in finite ``time" (here one of the coordinates). 
A special case appeared in Corollary~3.5 and Proposition~5.2 of \cite{CMHS15} (also see \cite{Zenc08sdYM}).
\hfill $\Box$
\end{example}

A subclass of integrable equations is obtained by choosing the matrix of $N_1$ as
$\mathrm{diag}(1,\ldots,1,0,\ldots,0)$, so that
\bez
                N^\mu{}_\nu = \delta^\mu_\nu \, ,
\eez
and all other components of the matrix of $N$ are zero. 
Then the conditions for the vanishing of the Fr\"olicher-Nijenhuis brackets are reduced to
\bez
           N^{\bar{\kappa}}{}_{[\mu,\nu]} = 0 \, , \qquad
           N^{\bar{\kappa}}{}_{[\mu} \, N^{\bar{\lambda}}{}_{\nu],\bar{\kappa}} = 0 \, ,
\eez
and (\ref{phi_N1_N2_eq_special}), (\ref{g_N1_N2_eq_special}) take the form
\be
            N^{\bar{\kappa}}{}_\mu \, \phi_{,\nu \bar{\kappa}} - N^{\bar{\kappa}}{}_\nu \, \phi_{,\mu \bar{\kappa}} 
         = [ \phi_{,\mu} , \phi_{,\nu}] \, ,     \label{phi_N1id_N2_eq_special}
\ee
respectively
\be
   N^{\bar{\kappa}}{}_\mu \, ( g_{,\bar{\kappa}} \, g^{-1})_{,\nu} - N^{\bar{\kappa}}{}_\nu \, ( g_{,\bar{\kappa}} \, g^{-1})_{,\mu} 
    = 0 \, .    \label{g_N1id_N2_eq_special}
\ee

\begin{example}
\label{ex:sdYM}
For $k=2$, i.e., in four dimensions, (\ref{phi_N1id_N2_eq_special}) and (\ref{g_N1id_N2_eq_special}) read
\bez
            N^{\bar{1}}{}_1 \, \phi_{,2 \bar{1}} + N^{\bar{2}}{}_1 \, \phi_{,2 \bar{2}}
            - N^{\bar{1}}{}_2 \, \phi_{,1\bar{1}}  - N^{\bar{2}}{}_2 \, \phi_{,1\bar{2}}  
         = [ \phi_{,1} , \phi_{,2}] \, ,
\eez
respectively 
\bez
       N^{\bar{1}}{}_1 \, ( g_{,\bar{1}} \, g^{-1} )_{,2} + N^{\bar{2}}{}_1 \, ( g_{,\bar{2}} \, g^{-1} )_{,2} 
       - N^{\bar{1}}{}_2 \, ( g_{,\bar{1}} \, g^{-1} )_{,1} - N^{\bar{2}}{}_2 \, ( g_{,\bar{2}} \, g^{-1} )_{,1} = 0 \, .
\eez
The conditions for the vanishing of the Fr\"olicher-Nijenhuis brackets are
\bez
      && N^{\bar{1}}{}_{1,2} = N^{\bar{1}}{}_{2,1} \, , \qquad 
       N^{\bar{2}}{}_{1,2} = N^{\bar{2}}{}_{2,1} \, , \\
      && N^{\bar{1}}{}_1 \, N^{\bar{1}}{}_{2,\bar{1}} +  N^{\bar{2}}{}_1 \, N^{\bar{1}}{}_{2,\bar{2}} 
            = N^{\bar{1}}{}_2 \, N^{\bar{1}}{}_{1,\bar{1}} +  N^{\bar{2}}{}_2 \, N^{\bar{1}}{}_{1,\bar{2}} \, , \\
     && N^{\bar{1}}{}_1 \, N^{\bar{2}}{}_{2,\bar{1}} +  N^{\bar{2}}{}_1 \, N^{\bar{2}}{}_{2,\bar{2}} 
            = N^{\bar{1}}{}_2 \, N^{\bar{2}}{}_{1,\bar{1}} +  N^{\bar{2}}{}_2 \, N^{\bar{2}}{}_{1,\bar{2}} \, .
\eez
Choosing $N^{\bar{1}}{}_1 = N^{\bar{2}}{}_2 =0$, the equations for $\phi$ and $g$ have the structure 
of familiar potential forms \cite{Chau83} of the 
self-dual Yang-Mills equation (on four-dimensional Euclidean space),
\bez
          N^{\bar{2}}{}_1 \, \phi_{,2 \bar{2}} - N^{\bar{1}}{}_2 \, \phi_{,1\bar{1}} = [ \phi_{,1} , \phi_{,2}] \, , \qquad
          N^{\bar{2}}{}_1 \, ( g_{,\bar{2}} \, g^{-1})_{,2} - N^{\bar{1}}{}_2 \, (g_{,\bar{1}} \, g^{-1})_{,1} = 0 \, ,
\eez
where now
\bez
           N^{\bar{1}}{}_{2,1} = N^{\bar{1}}{}_{2, \bar{2}} = 0 = N^{\bar{2}}{}_{1,2} = N^{\bar{2}}{}_{1,\bar{1}}  \, ,
\eez
so that $N^{\bar{1}}{}_2$ only depends on $x^2, \bar{x}^1$, and $N^{\bar{2}}{}_1$ only on $x^1, \bar{x}^2$.
Furthermore, setting $\bsy{\kappa}_\mu = \bsy{\lambda}_\mu = 0$, the matrices $P$ and $Q$ have to 
satisfy matrix versions of the Riemann equation,
\bez
     N^{\bar{2}}{}_1 \, P_{,\bar{2}} = P P_{,1} \, , \quad  N^{\bar{1}}{}_2 \, P_{,\bar{1}} = P P_{,2} \, , \quad
     N^{\bar{2}}{}_1 \, Q_{,\bar{2}} = Q_{,1} Q \, , \quad  N^{\bar{1}}{}_2 \, Q_{,\bar{1}} = Q_{,2} Q \, , 
\eez
and the linear systems take the form
\bez
   &&    N^{\bar{1}}{}_2 \, \theta_{,\bar{1}} - \theta_{,2} Q = A_2 \, \theta \, , \hspace{1.25cm}
       N^{\bar{2}}{}_1 \, \theta_{,\bar{2}} - \theta_{,1} Q = A_1 \, \theta \, , \\
   && N^{\bar{1}}{}_2 \, \eta_{,\bar{1}} - P \, \eta_{,2} = - \eta \, A_2  \, , \qquad
       N^{\bar{2}}{}_1 \, \eta_{,\bar{2}} - P \, \eta_{,1} = - \eta \,  A_1  \, .
\eez 
Hence, the linear systems depend on solutions of the above matrix Riemann equations.\footnote{This 
non-isospectrality is of fundamental importance in the case of integrable reductions of the Einstein 
(-Maxwell) vacum equations, see \cite{DMH13SIGMA}, in particular.}
\hfill $\Box$
\end{example}

\begin{example}
In the last case considered in the preceding example, we set $a:=N^{\bar{1}}{}_2$ and $b:=N^{\bar{2}}{}_1$, 
and restrict them to be constant. For vanishing seed, i.e., $A_\mu =0$, and constant matrices $P,Q$, the 
linear systems are solved by finite sums of the form 
\bez
      \theta = \sum_J \theta_J \, f_J (a x^2 I + \bar{x}^1 Q, b x^1 I + \bar{x}^2 Q) \, , \quad 
      \eta = \sum_J h_J(a x^2 I + \bar{x}^1 P, b x^1 I + \bar{x}^2 P) \, \eta_J \, ,
\eez
with constant $m \times r$ matrices $\theta_J$, constant $r \times m$ matrices $\eta_J$), and analytic functions $f_J,h_J$ of two variables. If $P$ and $Q$ are diagonal and satisfy the spectrum condition, 
then
\bez
         \Omega_{ij} = \frac{   (\eta \,  \theta)_{ij}}{p_i - q_j}  \qquad i,j=1,\ldots,r \, ,
\eez
and
\bez
                \phi' =  - \theta \, \Omega^{-1} \eta \, , \qquad
           g' = (I - \theta \, \Omega^{-1} P^{-1} \eta) \, g_0 \, ,
\eez
with an invertible constant $m \times m$ matrix $g_0$, solve the following potential versions of the self-dual 
Yang-Mills equations in four dimensions,
\bez
   b \, \phi_{,2 \bar{2}} - a \, \phi_{,1\bar{1}} = [ \phi_{,1} , \phi_{,2}] \, , \qquad
          b \, ( g_{,\bar{2}} \, g^{-1})_{,2} - a \, (g_{,\bar{1}} \, g^{-1})_{,1} = 0 \, .
\eez
\hfill $\Box$
\end{example}

In higher than four dimensions, i.e., for $k>2$, we get systems consisting of equations of the form obtained in Example~\ref{ex:sdYM}. 
Autonomous cases have been considered in \cite{Taka90icms,DMH00a,Gu08}.

\begin{example}
For $k=3$, (\ref{phi_N1id_N2_eq_special}) is the system of equations
\bez
    N^{\bar{\kappa}}{}_1 \, \phi_{,2 \bar{\kappa}} + N^{\bar{\kappa}}{}_2 \, \phi_{,2 \bar{\kappa}} = [ \phi_{,1} , \phi_{,2}] \, ,  \quad
    N^{\bar{\kappa}}{}_1 \, \phi_{,3 \bar{\kappa}} + N^{\bar{\kappa}}{}_3 \, \phi_{,1 \bar{\kappa}} = [ \phi_{,1} , \phi_{,3}] \, ,  \quad
    N^{\bar{\kappa}}{}_2 \, \phi_{,3 \bar{\kappa}} + N^{\bar{\kappa}}{}_3 \, \phi_{,2 \bar{\kappa}} = [ \phi_{,2} , \phi_{,3}] \, , 
\eez
and (\ref{g_N1id_N2_eq_special}) reads
\bez
     N^{\bar{\kappa}}{}_1 \, ( g_{,\bar{\kappa}} \, g^{-1})_{,2} = N^{\bar{\kappa}}{}_2 \, ( g_{,\bar{\kappa}} \, g^{-1})_{,1} \, , \quad
    N^{\bar{\kappa}}{}_1 \, ( g_{,\bar{\kappa}} \, g^{-1})_{,3} = N^{\bar{\kappa}}{}_3 \, ( g_{,\bar{\kappa}} \, g^{-1})_{,1} \, , \quad
    N^{\bar{\kappa}}{}_2 \, ( g_{,\bar{\kappa}} \, g^{-1})_{,3} = N^{\bar{\kappa}}{}_3 \, ( g_{,\bar{\kappa}} \, g^{-1})_{,2} \, .
\eez
\hfill $\Box$
\end{example}

Elementary Darboux transformations for autonomous generalized self-dual Yang-Mills equations appeared 
in \cite{Ma97,Gu08}. Our binary Darboux transformation is ``vectorial'' and generates multi-soliton solutions in a 
single step.

\subsection{Self-dual Yang-Mills hierarchy and generalizations}

Let $\cM$ be an infinite-dimensional manifold with local coordinates $x^{\mu_j}$, $\mu_j =1,\ldots,n$, $j=0,1,2,\ldots$.
A subscript of a Greek index thus specifies the corresponding subset of coordinates to which the Greek index refers.
As the matrices of $N_1$ and $N_2$, in these coordinates, we choose
\bez
          N_1 : \left( \begin{array}{ccccc} (N^{\mu_0}{}_{\nu_0}) & 0 & 0 &\cdots  \\
                                                                                                         0 & 0  & 0 & \cdots  \\
                                                                                        0 &  (N^{\mu_2}{}_{\nu_2}) & 0 & \cdots  \\ 
                                                                                                         0 & 0  & 0 & \cdots  \\
                                                                                                       \vdots & & \ddots &  \ddots  
             \end{array} \right)  \, , \quad
     N_2 : \left( \begin{array}{ccccc}   0& 0 & 0 &\cdots  \\
                                                                                                  (N^{\mu_1}{}_{\nu_0})  & 0  & 0 & \cdots  \\
                                                                                                  0 & 0  & 0 & \cdots  \\
                                                                                                  0 &  (N^{\mu_3}{}_{\nu_2}) & 0 & \cdots  \\                                                                                                    
                                                                                                  \vdots & & \ddots &  \ddots  
             \end{array} \right)  \, ,
\eez
We will assume that each block $(N^{\mu_{2p}}{}_{\nu_{2p}})$ is invertible. 
Then $[N_1,N_1]_{\mathrm{FN}} =0$ amounts to
\be
     &&   T((N^{\mu_{2p}}{}_{\nu_{2p}})) = 0 \, , \nonumber \\
      &&  N^{\kappa_{2q}}{}_{\nu_{2q}} \, N^{\lambda_{2p}}{}_{\mu_{2p}, \kappa_{2q}} 
         - N^{\lambda_{2p}}{}_{\kappa_{2p}} \, N^{\kappa_{2p}}{}_{\mu_{2p},\nu_{2q}} = 0 \quad \mbox{for}
        \; p \neq q \, ,  \nonumber \\
    &&  N^{\mu_{2p}}{}_{\nu_{2p}, \kappa_{2q+1}} = 0 \, .    \label{[N1,N1]=0}
\ee
Assuming also invertibility of $(N^{\mu_{2p+1}}{}_{\nu_{2p}})$, $[N_2,N_2]_{\mathrm{FN}} =0$ leads to the conditions
\bez
     &&  N^{\lambda_{2p+1}}{}_{\mu_{2p}, \kappa_{2p+1}} \, N^{\kappa_{2p+1}}{}_{\nu_{2p}} 
       - N^{\lambda_{2p+1}}{}_{\nu_{2p}, \kappa_{2p+1}} \, N^{\kappa_{2p+1}}{}_{\mu_{2p}} = 0 \, , \\
     &&   N^{\mu_{2p+1}}{}_{\nu_{2p}, \kappa_{2q+1}} = 0 \quad \mbox{for} \; p \neq q \, ,
\eez
and $[N_1,N_2]_{\mathrm{FN}} =0$ requires in addition
\bez
   &&   N^{\kappa_{2p}}{}_{[\mu_{2p}} \,  N^{\lambda_{2p+1}}{}_{\nu_{2p}], \kappa_{2p}} 
           +  N^{\lambda_{2p+1}}{}_{\kappa_{2p}} \,  N^{\kappa_{2p}}{}_{[\mu_{2p},\nu_{2p}]} = 0 \, , \\
   &&    N^{\kappa_{2p}}{}_{\mu_{2p}} \, N^{\lambda_{2q+1}}{}_{\nu_{2q}, \kappa_{2p}} 
         -  N^{\lambda_{2q+1}}{}_{\kappa_{2q}} \, N^{\kappa_{2q}}{}_{\nu_{2q},\mu_{2p}} = 0 \quad \mbox{for} \; p \neq q \, . 
\eez
 (\ref{phi_N1_N2_eq}) takes the form
\be
  &&    N^{\kappa_{2p+1}}{}_{\mu_{2p}} \, N^{\lambda_{2q}}{}_{\nu_{2q}} \, \phi_{,\lambda_{2q} \kappa_{2p+1}} - N^{\kappa_{2q+1}}{}_{\nu_{2q}} \, N^{\lambda_{2p}}{}_{\nu_{2p}} \, \phi_{,\lambda_{2p} \kappa_{2q+1}} = N^{\kappa_{2p}}{}_{\mu_{2p}} \, N^{\lambda_{2q}}{}_{\nu_{2q}} \, [\phi_{,\kappa_{2p}} , \phi_{,\lambda_{2q}}] \, , \nonumber \\
 && \hspace{5cm} p,q=0,1,2,\ldots \, ,      \label{gsdYMhier_phi}
\ee
using the last of equations (\ref{[N1,N1]=0}), and (\ref{g_N1_N2_eq}) reads
\bez
   &&   N^{\kappa_{2p}}{}_{\mu_{2p}} \, ( N^{\lambda_{2q+1}}{}_{\nu_{2q}} \, g_{,\lambda_{2q+1}} g^{-1} )_{,\kappa_{2p}}
   -  N^{\kappa_{2q}}{}_{\nu_{2q}} \, ( N^{\lambda_{2p+1}}{}_{\mu_{2p}} \, g_{,\lambda_{2p+1}} g^{-1} )_{,\kappa_{2q}}  \\
   && + N^{\kappa_{2p+1}}{}_{\lambda_{2p}} \, N^{\lambda_{2p}}{}_{\mu_{2p}, \nu_{2q}} \, g_{,\kappa_{2p+1}} g^{-1}
      - N^{\kappa_{2q+1}}{}_{\lambda_{2q}} \, N^{\lambda_{2q}}{}_{\nu_{2q}, \mu_{2p}} \, g_{,\kappa_{2q+1}} g^{-1} = 0 \, ,
  \quad p,q =0,1,2,\ldots \, ,  
\eez
which, by use of the required properties of the coefficients, reduces to
\be
    && N^{\kappa_{2p}}{}_{\mu_{2p}} \, N^{\lambda_{2q+1}}{}_{\nu_{2q}} \, ( g_{,\lambda_{2q+1}} g^{-1} )_{,\kappa_{2p}}
   -  N^{\kappa_{2q}}{}_{\nu_{2q}} \, N^{\lambda_{2p+1}}{}_{\mu_{2p}} \, ( g_{,\lambda_{2p+1}} g^{-1} )_{,\kappa_{2q}}  = 0 
          \nonumber \\
         && \hspace{5cm} p,q=0,1,2,\ldots \, ,                           \label{gsdYMhier_g}
\ee

The restriction to $p,q=0,1,$ leads back to the case considered in the preceding subsection. 
By construction, (\ref{gsdYMhier_phi}) and (\ref{gsdYMhier_g}) are integrable systems and they may be 
regarded as hierarchies associated with the generalized self-dual Yang-Mills equations.

\begin{remark}
Takasaki's self-dual Yang-Mills hierarchy \cite{Taka90} is based on a ``zero-curvature system" (see (2.1) in \cite{Taka90}),
of which an infinite-dimensional version of (\ref{0curv_system}) is a generalization. Nakamura's self-dual Yang-Mills hierarchy \cite{Naka88,Naka88red,Naka91} is included in Takasaki's \cite{Taka90}. Also see \cite{BLR82,ACT93} 
for related work.
\hfill $\Box$
\end{remark}

\section{Conclusions}
\label{sec:conclusions} 
We have shown that, in any dimension greater than one, there is a class of integrable systems, parametrized 
by two type $(1,1)$ tensor fields $N_1,N_2$, with vanishing Fr\"olicher-Nijenhuis brackets. Some familiar 
integrable systems are special cases: the chiral field equation, the self-dual Yang-Mills equation (in a certain gauge), higher-dimensional generalizations of the latter, and corresponding hierarchies. For these examples, the 
components of $N_1$ and $N_2$ are constant in some coordinate system, in which case the 
Fr\"olicher-Nijenhuis brackets are trivially zero. But vanishing of these brackets does not imply the 
existence of such coordinates, so that there are many non-autonomous generalizations of the 
aforementioned integrable systems. In the case of the chiral field equation, any non-constant holomorphic function 
determines a non-autonomous generalization. Most likely, further interesting integrable systems will 
emerge from the Fr\"olicher-Nijenhuis framework.

The familiar autonomous chiral field equation, self-dual Yang-Mills equations and their higher-dimensional 
generalizations, have been treated before in bi-differential calculus, using a bi-differential graded algebra 
based on a constant Grassmann algebra \cite{DMH00a}, which is not a differential-geometric setting. 
In the present work, we have shown that these systems actually show up in differential geometry. 
Moreover, the framework allows for substantial generalizations.

\vspace{.3cm}
\noindent
\textbf{Acknowledgments.} The author is grateful to Wen-Xiu Ma, Rusuo Ye and Yi Zhang for an invitation to present a 
series of lectures about the topic of this work at Zhejiang Normal University in Jinhua in April 2024, and for related 
discussions. 
He also thanks Xiaomin Chen for an invitation to talk about this at Beijing University of Technology in May 2024, and for inspiring discussions.

\small

\normalsize

\end{document}